\pdfoutput=1 
\documentclass[dvipsnames]{siamart251216}
\usepackage{amsmath,amssymb}
\usepackage{array}
\DeclareMathOperator*{\argmin}{argmin}
\usepackage{subcaption}
\usepackage[T1]{fontenc}
\usepackage[utf8]{inputenc} 
\usepackage{verbatim}
\usepackage{fancyvrb}
\definecolor{linkcolor}{rgb}{0.0,0.3,0.5}
\newsiamthm{claim}{Claim}
\newsiamremark{remark}{Remark}
\newsiamremark{hypothesis}{Hypothesis}
\crefname{hypothesis}{Hypothesis}{Hypotheses}
\usepackage[all]{hypcap}
\usepackage{graphicx}
\usepackage{mathrsfs}
\usepackage{xspace}
\usepackage{amssymb}
\usepackage[normalem]{ulem} 
\usepackage{bm}             
\usepackage{microtype}      
\usepackage[english]{babel}
\usepackage{blindtext}
\DeclareMathAlphabet{\mathpzc}{OT1}{pzc}{m}{it}
\graphicspath{
  {figs/} 
}
\usepackage{pgfplotstable}
\usepackage{booktabs}
\usepackage{pgfplots}
\pgfplotsset{compat=1.18}
\newcommand{\psiphys}{\Psi}
\newcommand{\psimode}{\Psi}
\newcommand{\cfactor}{C}

\title{Radiation outer boundary conditions and\\
near-to-far field signal transformations\\
for the Bardeen-Press equation}
\author{
Som Dev Bishoyi\thanks{Department of Mathematics,
Center for Scientific Computing
and Data Science Research,
University of Massachusetts,
Dartmouth, MA 02747, USA
(\email{sbishoyi@umassd.edu}).}
\and Scott E. Field\thanks{Department of Mathematics,
Center for Scientific Computing
and Data Science Research,
University of Massachusetts,
Dartmouth, MA 02747, USA
(\email{sfield@umassd.edu}).}
\and
Stephen R. Lau\thanks{Department of
Mathematics and Statistics,
University of New Mexico,
Albuquerque, NM 87131, USA
(\email{lau@unm.edu}).}
}

\begin{document}
\maketitle
\begin{abstract}
Several theoretical and astrophysical problems
---including gravitational-wave modeling for extreme
mass-ratio inspirals--- require accurate time-domain
solutions of the spin-weight $s=-2$ Teukolsky equation in
Boyer-Lindquist coordinates. Because such simulations are
performed on finite computational domains, they typically
introduce an artificial outer boundary where nontrivial
boundary conditions must be imposed. If these conditions
are inaccurate, then spurious reflections and slowly-growing
unphysical modes may corrupt long-time evolutions. We
develop and implement exact radiation outer boundary
conditions for the Bardeen-Press equation (a 
harmonic moment of the $a=0$ Teukolsky equation), 
making the artificial boundary transparent at any finite
radius. We also construct near-to-far field teleportation
kernels that map field data recorded at finite radius $r_1$
to the data reaching $r_2 > r_1$. The possible choice
$r_2 = \infty$ corresponds to asymptotic waveform evaluation,
that is propagation of the data to future null infinity. We
show that both boundary and teleportation kernels are well
approximated by exponential sums, with associated error
bounds. Implemented in a time-domain solver, our kernel-based
boundary conditions eliminate unphysical late-time growth
and give the correct late-time decay rates, affording 
efficient long-duration simulations for waveform modeling
and related blackhole perturbation calculations.
\end{abstract}

\begin{keywords}
Transparent boundary conditions,
near-to-far field teleportation,
Bardeen-Press equation 
\end{keywords}
\section{Introduction}
\label{Sec:Introduction}

This paper addresses exact radiation outer boundary conditions
(ROBC) and near-to-far field transformations\footnote{
We describe such transformations as radial {\em teleportation},
a memorable term but a slight misnomer.}
for blackhole perturbations in the Teukolsky formalism
with Kerr angular momentum parameter $a=0$.
The Teukolsky master equation describes scalar, vector,
and tensor perturbations of Kerr blackholes~\cite{Teukolsky:1973ApJ}.
Numerical solution of the Teukolsky equation is a primary
means of studying black hole perturbations with the following
applications:
quasi-normal mode spectra~\cite{Berti:2009kk,Leaver:1985ax,
Berti:2025hly};
gravitational waveforms~\cite{Islam:2024vro,Sundararajan:2007jg,
Hughes:2019zmt,Lim:2022veo,Wardell:2021fyy,Rink:2024swg};
extremal blackhole physics~\cite{Burko:2020wzq,
Bishoyi:2024lqm, Gelles:2025gxi, Aretakis:2023ast};
and
black-hole scattering theory and superradiant
amplification~\cite{Teukolsky:1974yv, Brito:2015oca},
including radiation reaction and environmental
effects~\cite{Dolan:2012yt,Cardoso:2021wlq}.
In Boyer-Lindquist coordinates the homogeneous equation is
\begin{align}\label{eq:teuk0}
\begin{split}
\Big[& \frac{(r^2 + a^2)^2}{\triangle}
- a^2\sin^2\theta\Big]\partial^2_t \psiphys
+ \frac{4Mar}{\triangle}
\partial_{t}\partial_{\phi}\psiphys
+ \Big[\frac{a^2}{\triangle}
- \frac{1}{\sin^2\theta}\Big]
\partial^2_\phi\psiphys
\\
& -\triangle^{-s} \partial_r
(\triangle^{s+1}\partial_r \psiphys)
- \frac{1}{\sin\theta}\partial_\theta
(\sin\theta \partial_\theta\psiphys)
- 2s\Big[\frac{a(r-M)}{\triangle}
+ \frac{\mathrm{i}\cos\theta}{\sin^2\theta}\Big]
\partial_\phi\psiphys
\\
& + 2s\Big[r + \mathrm{i}a\cos\theta
- \frac{M(r^2-a^2)}{\triangle}\Big]\partial_t\psiphys
+ (s^2\cot^2\theta - s)\psiphys = 0,
\end{split}
\end{align}
where $s$ is the spin weight of $\psiphys$, and $\triangle = r^2-2Mr+a^2$
in terms of the blackhole mass $M$ and angular momentum per unit mass $a$.
Setting $a=0$, the Teukolsky equation reduces to the Bardeen-Press
equation~\cite{Bardeen:1973xb} governing curvature perturbations of the
Schwarzschild geometry. This paper considers the $a=0$ case (zero
Kerr angular momentum), while retaining all possible values
$s={\pm 2, \pm 1, 0}$ for the field's spin. The $s=\pm 2$ equations describe
the radiative degrees of freedom of the gravitational field: at future null
infinity, for example, the two gravitational-wave polarizations can be
reconstructed from $\psiphys$-fields for $s=-2$.

Beyond their theoretical interest, radiation outer boundary conditions and
near-to-far field transformations are essential for numerical simulations.
Long-time evolutions ideally stem from
domain reduction. Such reduction entails outer boundary conditions
which determine a computed solution agreeing with the open-problem solution
restricted to the finite computational domain. The open Cauchy problem is
posed on a spatially infinite domain. The far-field signal reaching
future null infinity is of particular practical importance, since gravitational-wave
detectors are well modeled as idealized observers located there. Accordingly,
computing the waveform at future null infinity is often the primary
purpose of a simulation. 

The subject of radiation boundary conditions is an established and ongoing
research front in wave phenomena, with developments so manifold that a
comprehensive review is impossible. Instead, we focus on selected works
from the relativity literature. Although each generalizes corresponding
procedures for the ordinary wave equation, the gauge issue associated
with the Einstein equation is absent in that setting. This
complication amplifies the difficulty of moving from scalar to tensor
fields and from linear to nonlinear theory. The works we mention are
therefore not mere extensions of their counterparts for the ordinary wave
equation.

Buchman and Sarbach \cite{Buchman_2006} described radiation boundary
conditions for the linearized Einstein equations, generalizing work on the
ordinary wave equation by Bayliss and
Turkel~\cite{https://doi.org/10.1002/cpa.3160330603}. Their approach
assumes a spherical outer boundary and introduces a hierarchy of local
differential conditions for the Weyl scalar $\Psi_4$ based on the angular
index $\ell$. Subsequent refinements \cite{Buchman_2007} allowed for a
non-spherical boundary and introduced nonlocality to capture backscatter.
Reference \cite{Rinne:2008vn} developed absorbing boundary conditions
based on the wave-equation approach of Hagstrom and
Hariharan~\cite{HAGSTROM1998403}. By introducing auxiliary boundary
variables, this method avoids the numerically problematic high-order
differential conditions associated with a Bayliss-Turkel-style hierarchy.
Although the conditions involve formal infinite expansions, a closure
condition fixes the approximation order and defines incoming
characteristic information. As before, this work goes well beyond the wave
equation in addressing gauge issues. Moreover, the approach has been
implemented within the {\tt SpEC} infrastructure~\cite{SpECwebsite} and
tested \cite{Buchman_2024}. Their method also contends with issues that
would arise in extending our convolution-based approach to the Einstein
equations.

Other strategies for outer boundary conditions and
far-field signal recovery include the hyperboloidal-layer method
\cite{Zenginoglu:2010cq, Zenginoglu:2011zz} pioneered by Zenginoglu and,
more recently, Cauchy-characteristic matching
(CCM)~\cite{Ma:2024hzq,Ma:2023qjn}. Hyperboloidal layers, and related
approaches based on hyperboloidal
foliations~\cite{Ansorg:2016ztf, PanossoMacedo:2019npm}, use radial
compactification together with a suitable time transformation to adjoin
(part of) future null infinity as a boundary of the computational domain.
Their physical appeal is obvious, and they avoid the spurious
high-frequency reflections present in older spatial compactification
schemes, although other work \cite{APPELO20094200} has also addressed
this issue. Similar in spirit, CCM couples a Cauchy evolution in the
near zone to a characteristic evolution in the wave zone with a
matching interface. Through compactification, the characteristic evolution
reaches future null infinity, thereby avoiding artificial outer boundary
conditions.

This article builds on the
approach~\cite{alpert2000rapid,alpert2002nonreflecting} of Alpert,
Greengard, and Hagstrom (AGH) for ROBC in flat spacetime. Starting from
the spatially and temporally nonlocal boundary conditions specifying exact
domain reduction to a finite spherical region, AGH developed a framework
based on rational approximation for rapid implementation, with
theoretically established accuracy and complexity estimates. References
\cite{Lau:2004as,Lau:2004jn,Lau:2005ti} generalized the AGH approach
(without some of its theoretical foundations) to the Regge-Wheeler and
Zerilli equations governing metric perturbations of Schwarzschild black
holes, while near-to-far field transformations in the same setting were
addressed in \cite{Benedict:2012kw}. Ironically, the investigation
\cite{Benedict:2012kw} for black-hole perturbations appeared before the
corresponding theoretical and numerical work for the ordinary wave
equation in \cite{Greengard_2014,Field:2014cka}. Reference
\cite{Greengard_2014} established that high-$\ell$ near-to-far field
teleportation is ill-conditioned, but numerical investigations
\cite{Field:2014cka} suggest that our implementation ameliorates this
ill-conditioning and remains useful for $\ell$ beyond 100, a large value
for relativity applications. Historically, Wilcox
\cite{wilcox1959initial} and Tokita \cite{Tokita1972} were the first to
address near-to-far field transformations.

A key difficulty with solutions to the Bardeen-Press equation stems from
the following issue raised by Hughes \cite{PhysRevD.62.044029}. Consider
the $s=-2$ case and waves of a given given frequency $\omega$ propagating
near an outer boundary at $r=r_b \gg 1$. In terms of the tortoise
coordinate $r_*$ (see below), the outgoing and incoming solutions in the
vicinity of the boundary have the asymptotic forms $e^{i\omega r_*}$ and
$r^{-4}e^{-i\omega r_*}$, ignoring the overall $e^{-i\omega t}$ temporal
dependence; see \Cref{app:ConfluentHeun}. The general solution will then be
{\em asymptotic} to $C_1 e^{i\omega r_*} + C_2r^{-4}e^{-i\omega r_*}$, with
$C_1$ and $C_2$ determined both by the initial data and the boundary
conditions. Ideally, the boundary conditions enforce $C_2=0$. In practice,
due to roundoff error, we can only expect $C_2r_b^{-4}$ close to machine
precision $\varepsilon_\mathrm{mach}$. Therefore, $C_2\simeq r_b^4
\varepsilon_\mathrm{mach}$ can be appreciably larger than
$\varepsilon_\mathrm{mach}$, meaning roundoff errors incurred at
the boundary (indeed, at any large radii) will affect in the
interior where $r^{-4}C_2 \simeq (r_b/r)^4 \varepsilon_\mathrm{mach} 
\gg \varepsilon_\mathrm{mach}$. This description highlights the 
importance of ROBC which afford faithful numerical evolutions on
short domains.

The new results in this paper are the following. First, we
theoretically analyze exact ROBC and near-to-far field teleportation
for the Bardeen Press equation, identifying the correct variable
transformations needed to define the corresponding convolution kernels.
While this analysis is similar to corresponding derivations for the
Regge-Wheeler and Zerilli equations, it proves more difficult.
Second, we show that both radiation and teleportation kernels for the
Bardeen-Press equation admit accurate rational approximations in the
frequency domain, corresponding to sum-of-exponential approximations
in the time-domain. Again, while these results are similar to earlier
work on metric gravitational perturbations, the kernels in the
Bardeen-Press setting prove harder to approximate. We also present
new error bounds for the compressed kernel approximations.
Finally, via long-time, time-domain evolutions, we demonstrate the
effectiveness of kernel-based ROBC and teleportation in the present
setting. With both implementations and simulations on finite radial
intervals, we compute the correct late-time tail behavior of
radiation reaching future null infinity.

This paper is organized as follows.
\Cref{sec:radiation_outer_boundary_conditions} analyzes ROBC for
the Bardeen-Press equation, introducing the relevant kernels.
\Cref{sec:asymptotic_waveform_evaluation_and_signal_teleportation}
describes the kernels for near-to-far field signal teleportation.
\Cref{sec:kernel_eval_comp} studies rational approximation of kernels.
\Cref{sec:NumExp} presents time-domain
evolutions studying the effectiveness of ROBC and teleportation. Several
appendices supplement the main text.

\section{Radiation outer boundary conditions}
\label{sec:radiation_outer_boundary_conditions}

We follow \cite{Lau:2004as,Lau:2004jn,Lau:2005ti} which 
derived ROBC for the Regge-Wheeler and Zerilli
equations. Chandrasekhar discovered \cite{10.1098/rspa.1975.0066,Chandrasekhar:BHs}
that solutions to the Regge-Wheeler
equation can be related to solutions to the Bardeen-Press equation via
an eponymous transformation. However, we have been unable to
exploit the Chandrasekhar transformation to obtain
ROBC for the Bardeen-Press equation from those associated
with the Regge-Wheeler equation. We shall instead apply the
procedures outlined in \cite{Lau:2004as,Lau:2004jn} to the Bardeen-Press
equation \cref{eq:teuk0-1p1-v4}. The strategy is
to (i) derive an ODE in the Laplace
frequency domain which describes the non-trivial piece of the outgoing
solution, (ii) formulate the exact frequency-domain
ROBC in terms of this non-trivial piece, and (iii) approximate the
ROBC in a fashion that carries over to the time-domain.

\subsection{Bardeen-Press equation}
\label{Subsec:Evolution}
Derivation of the Bardeen-Press equation from \cref{eq:teuk0} with $a=0$
starts with expansion of the Teukolsky master function:
\cite{Stein:2012ffl,Dolan:2012yt,Barack:2017oir}
\begin{align}
\label{eq:anstaz}
\psiphys(t,r,\theta,\phi) =
\frac{1}{r \triangle^s} \sum_{\ell=|s|}^{\infty}
\sum_{m=-\ell}^{\ell}\psimode_{\ell m}(t,r)
{}_sY_{\ell m}(\theta, \phi).
\end{align}
This expansion features the spin-weighted spherical harmonics
${}_sY_{\ell m}(\theta,\phi)$ and accounts for large-$r$ behavior. For
the harmonics we adopt the same definition and conventions as
\cite{Thorne:1980ru}. Substitution of \eqref{eq:anstaz} into
\eqref{eq:teuk0} and subsequent use of well-known identities for the
spin-weighted harmonics (notably Eq.~(25) from~\cite{Barack:2017oir})
reduces the $a=0$ Teukolsky equation to a decoupled set of 1+1
wave equations for the mode coefficients $\psimode_{\ell m}(t,r)$.
These equations take a simple form when expressed in terms of the
tortoise coordinate $r_{*}$ defined by $d r_{*}=(r^{2}/ \triangle) d r$.
In $(t,r_{*})$ coordinates the equation obeyed by
$\psimode_{\ell m}$ is
\begin{align}\label{eq:teuk0-1p1-v3}
  \partial^2_t\psimode_{\ell m}
- \partial^2_{r_*}\psimode_{\ell m}
+ \frac{2s(r-3M)}{r^2}\partial_t\psimode_{\ell m}
+ \frac{2s(r-M)}{r^2}\partial_{r_*}\psimode_{\ell m}
+ V(r)\psimode_{\ell m} = 0,
\end{align}
where the potential is
\begin{align}\label{eq:potential}
V(r) = \left(1-\frac{2M}{r}\right)\left[
\frac{2M(s+1)}{r^{3}}+ \frac{(\ell - s)(\ell+s+1)}{r^2}\right].
\end{align}
Note that \eqref{eq:teuk0-1p1-v3} is identical to the 
($s=-2$ and $a=0$) mode equation used
in~\cite{Barack:2017oir}, although that work uses Eddington-Finkelstein
null coordinates. Reference \cite{Barack:2017oir}
has already analyzed the ingoing (toward the horizon) and outgoing
(toward future null infinity) solutions to this equation.
For a wave of fixed frequency $\omega$ these special solutions behave as
\begin{align}\label{eq:BarackSpecialSolutions}
\psimode_{\ell}^\mathrm{down}(t,r)
\stackrel{r\rightarrow 2M^+}{\sim}
\mathrm{e}^{-i \omega (t + r_{*})},\qquad
\psimode_{\ell}^\mathrm{out}(t,r)
\stackrel{r\rightarrow \infty}{\sim}
\mathrm{e}^{-i \omega (t - r_{*})}.
\end{align}
Notably, here neither the downgoing nor outgoing wave amplitudes
vary with $r$. To verify correctness of the behavior
\cref{eq:BarackSpecialSolutions}, notice that the wave operator
in \cref{eq:teuk0-1p1-v3} tends to
$\partial_t^2-\partial_{r_*}^2
-\frac{1}{2}sM^{-1}(\partial_t-\partial_{r_*})$ and
$\partial_t^2-\partial_{r_*}^2$ as $r\rightarrow 2M^{+}$
and $r\rightarrow\infty$, respectively.

We rewrite \eqref{eq:teuk0-1p1-v3} in terms of dimensionless coordinates
$\tau = t/(2M)$, $\rho = r/(2M)$, and $\rho_* = \rho + \log(\rho -1)$,
finding
\begin{align}\label{eq:teuk0-1p1-v4}
\begin{split}
\partial^2_\tau\psimode_{\ell m}
-\partial^2_{\rho_*}\psimode_{\ell m}
+\frac{2s(\rho-\frac{3}{2})}{\rho^2}
 \partial_\tau\psimode_{\ell m}
+\frac{2s(\rho-\frac{1}{2})}{\rho^2}
 \partial_{\rho_*}\psimode_{\ell m}
+V(\rho)\psimode_{\ell m}
= 0.
\end{split}
\end{align}
Here we have redefined both the meaning of $\psimode_{\ell m}$ and the
potential
\begin{align}\label{eq:potential_rho}
V(\rho) = \left(1-\frac{1}{\rho}\right)
\left[\frac{(s+1)}{\rho^{3}}+
\frac{(\ell - s)(\ell+s+1)}{\rho^2}\right].
\end{align}

\subsection{Solution structure in Laplace frequency domain}
\label{Sec:LaplaceDomain}
Formal Laplace transformation\footnote{Except for
\cref{sub:l2_norm_bounds}, the variable $s$ exclusively
represents spin and {\em not} the Laplace-frequency dual
to time $t$. We use $\sigma$ as the dimensionless
Laplace-frequency dual to dimensionless time $\tau$.}
of \cref{eq:teuk0-1p1-v4} yields\footnote{With the
transformation $\widehat{\Psi}_\ell =
[\rho(\rho-1)]^{s/2}\widehat{\Theta}_\ell$,
the ODE \cref{eq:Teuk_laplace0} takes the
Schr\"{o}dinger form
$$
-\frac{d^2 \widehat{\Theta}_\ell}{d\rho_*^2} +
(\sigma^2 + U)\widehat{\Theta}_\ell = 0,
\qquad
U = 2s\sigma \frac{(\rho-\frac{3}{2})}{\rho^2}
+ \frac{s^2(\rho-\frac{1}{2})^2 + s(\rho-1)^2}{\rho^4} + V(\rho).
$$
Note that $U(\rho)$ falls-off as $1/\rho$.
}
\begin{align}
\label{eq:Teuk_laplace0}
-\frac{d^2\widehat{\psimode}_{\ell m}}{d\rho_*^2}
+\frac{2s (\rho-\frac{1}{2})}{\rho^2}
 \frac{d\widehat{\psimode}_{\ell m}}{d \rho_*}
+\frac{2s\sigma (\rho-\frac{3}{2})}{\rho^2}
 \widehat{\psimode}_{\ell m}
+V(\rho)\widehat{\psimode}_{\ell m}
+\sigma^2\widehat{\psimode}_{\ell m} = 0,
\end{align}
where $\widehat{\psimode}_{\ell m}(\sigma,\rho)
= (\pounds\psimode_{\ell m}(\cdot,\rho))(\sigma)$ is the
Laplace transform of $\psimode_{\ell m}(\tau,\rho)$.
Given a boundary radius $\rho_b$, we
assume that the initial data $\{\psimode_{\ell m}(0,\rho),
(\partial_\tau\psimode_{\ell m})(0,\rho)\}$
vanishes for $\rho > \rho_b$. The {\em homogeneous}
character of \cref{eq:Teuk_laplace0} stems from this
assumption. With
\begin{align}\label{eq:CHtrans}
\widehat{\psimode}_{\ell m}
= e^{-\sigma\rho_*}\Big(\frac{\rho-1}{\rho}\Big)^s
\widehat{\Phi}_{\ell m},
\end{align}
\eqref{eq:Teuk_laplace0} becomes
\begin{align}\label{eq:BardeenPress-Phihat}
\frac{d^2\widehat{\Phi}_{\ell m}}{d\rho^2}
+P(\rho, \sigma)\frac{d\widehat{\Phi}_{\ell m}}{d\rho}
+Q(\rho)\widehat{\Phi}_{\ell m} = 0,
\end{align}
where for later use we have defined
\begin{align}\label{eq:PandQ2}
\begin{split}
P(\rho, \sigma) & =
-2\sigma - \frac{3s+1}{\rho} + \frac{1+s-2\sigma}{\rho-1}
\\
Q(\rho) & =
\frac{(s+1)(2s+1)}{\rho^2}
-\frac{(s+1)^2 + \ell (\ell + 1)}{\rho (\rho-1)}.
\end{split}
\end{align}
\Cref{eq:BardeenPress-Phihat} is a
normal form of the confluent Heun equation;
see \cref{app:ConfluentHeun}.

As shown in \cref{app:ConfluentHeun}, about $\rho=\infty$
(the point at infinity), we may choose asymptotically
outgoing and incoming solutions to
\eqref{eq:BardeenPress-Phihat} with the expansions
\begin{align}\label{eq:Thomeoutinc}
\widehat{\Phi}_{\ell}^\text{out}(\sigma,\rho)
\sim 1 + \frac{L}{2\sigma\rho},
\qquad
\widehat{\Phi}_{\ell}^\text{inc}(\sigma,\rho)
\sim \rho^{2s}e^{2\sigma\rho_*},
\end{align}
where $L=\ell(\ell+1)-s(s+1)$. The special solution
$\Psi_\ell^\text{out}$ in \cref{eq:BarackSpecialSolutions}
stems from inverse Laplace transform of
$(\rho-1)^s \rho^{-s} e^{-\sigma\rho_*}
\widehat{\Phi}_{\ell m}^\text{out}$ (and passage back
to physical coordinates).

Following \cite{Lau:2004jn,Lau:2004as} with $z = \sigma \rho$,
use\footnote{
Along with other expressions which follow, $W_{\ell}(z;\sigma)
= {}_sW_{\ell}(z;\sigma)$ depends on the spin weight $s$ appearing in
\cref{eq:BardeenPress-Phihat}. We suppress this dependence.}
$W_{\ell}(z;\sigma)$ for the outgoing solution to
\eqref{eq:BardeenPress-Phihat}, so
\begin{align}
W_{\ell}(\sigma \rho;\sigma) = 
\widehat{\Phi}^\text{out}(\rho,\sigma).
\end{align}
This choice of argument structure facilitates description of
our procedures for numerical evaluation of the outgoing solution
and radiation kernel; see~\cref{app:KernelEvaluation}. 
Moreover, for $s=0$ and under the $\sigma \rightarrow 0$ limit
with $z$ held fixed, the outgoing solution is related to
MacDonald's function~\cite{AbramowitzStegun}
\begin{align}
K_{\ell+1/2}(z) = \sqrt{\frac{\pi}{2z}}e^{-z}
W_\ell(z;0).
\end{align}

\subsection{Radiation kernels}
\label{subsec:ROBC}
Write the outgoing solution to \eqref{eq:Teuk_laplace0} as
\begin{align}
\label{eq:outgoing_FD}
\widehat{\psimode}_{\ell}^\mathrm{out}(\sigma,\rho) =
e^{-\sigma\rho_*}
\Big(\frac{\rho-1}{\rho}\Big)^s W_\ell(\sigma\rho;\sigma),
\end{align}
and assume initial data compactly supported for $\rho < \rho_b$.
Then the solution to \eqref{eq:Teuk_laplace0} is
\begin{align}\label{eq:apsiout}
\widehat{\psimode}_{\ell m}(\sigma,\rho) =
a_{\ell m}(\sigma)
e^{-\sigma\rho_*}
\Big(\frac{\rho-1}{\rho}\Big)^s W_\ell(\sigma\rho;\sigma),
\end{align}
for $\rho \geq \rho_b$,
where $a_{\ell m}(\sigma)$ depends on the initial data and
possible source terms. We now apply the Sommerfeld operator
$\sigma + \partial_{\rho_*}$ to \eqref{eq:apsiout}, thereby finding
\begin{align}\label{eq:Eq2}
\begin{split}
(\sigma + \partial_{\rho_*})\widehat{\psimode}_{\ell m}(\sigma,\rho)
& = \frac{\rho-1}{\rho^2}
\widehat{\omega}_\ell(\sigma,\rho)
\widehat{\psimode}_{\ell m}(\sigma,\rho)
+ \frac{s}{\rho^2}
\widehat{\psimode}_{\ell m}(\sigma,\rho),
\end{split}
\end{align}
where we have defined the frequency-domain radiation kernel (FDRK)
\begin{align}
\label{eq:ROBC_kernelFD}
\widehat{\omega}_\ell(\sigma,\rho) = 
\frac{\rho \partial_\rho W_{\ell}(\sigma \rho;\sigma)}{
W_{\ell}(\sigma \rho;\sigma)} =
\sigma \rho \frac{W_{\ell}'(\sigma \rho;\sigma)}{
W_{\ell}(\sigma \rho;\sigma)},
\end{align}
with the prime indicating differentiation in the argument $z$.
In \cref{eq:Eq2} $\widehat{\omega}_\ell$ depends on $\ell$, $s$,
and $\rho$, whereas the solution $\widehat{\psimode}_{\ell m}$
has additional dependence on $m$ through (generally $m$-dependent)
sources and initial data.

We argue that the kernel $\widehat{\omega}_\ell$
decays for large $\sigma$, and thus has a classical inverse Laplace
transform. The first expansion in \eqref{eq:Thomeoutinc} suggests
that $\widehat{\omega}_\ell$ will decay as $\sigma\rightarrow \infty$.
To confirm this expectation, notice that the kernel obeys the
Riccati-type equation
\begin{align}\label{eq:Riccatirho}
\frac{d\widehat{\omega}_\ell}{d\rho}
+ \frac{\widehat{\omega}_\ell^2}{\rho}
+ \Big[P(\rho, \sigma) - \frac{1}{\rho}\Big] \widehat{\omega}_\ell
+ \rho Q(\rho) = 0,
\end{align}
where $P(\rho, \sigma)$ and $Q(\rho)$ are given in \cref{eq:PandQ2}.
Consider then the formal expansions
\begin{align}\label{eq:expandomega}
\widehat{\omega}_\ell & = \frac{\widehat{\omega}_\ell^{(-1)}}{\sigma}
                        + \frac{\widehat{\omega}_\ell^{(-2)}}{\sigma^2}
                        + O\Big(\frac{1}{\sigma^3}\Big),
\qquad
P(\rho, \sigma)\,\widehat{\omega}_\ell
= -\frac{2\rho}{\rho-1}\widehat{\omega}_\ell^{(-1)}
+ O\Big(\frac{1}{\sigma}\Big).
\end{align}
Substitution of these expansions into \eqref{eq:Riccatirho} followed by
balance of $O(1)$ terms yields
\begin{align}
\widehat{\omega}_\ell^{(-1)}
= \frac{(s+1)(2s+1)(\rho-1)}{2\rho^2}
-\frac{(s+1)^2 + \ell(\ell + 1)}{2\rho}.
\end{align}
Balance at $O(1/\sigma)$ determines $\widehat{\omega}_\ell^{(-2)}$
in terms of $\widehat{\omega}_\ell^{(-1)}$ and it $\rho$ derivative.
The consistency of this formal calculation suggests that the kernel
$\widehat{\omega}_\ell$ decays for large $\sigma$.

With evaluation at $\rho=\rho_b$ and inverse Laplace
transformation, the frequency-domain ROBC \cref{eq:Eq2}
corresponds to the nonlocal time-domain boundary condition
\begin{align}\label{eq:ROBC_kernelTD}
\begin{split}
\Big(
\frac{\partial \psimode_{\ell m}}{\partial \tau}
+ \frac{\partial \psimode_{\ell m}}{\partial \rho_*}
\Big)\Big|_{\rho=\rho_b}
& = \frac{\rho_b-1}{\rho_b^2}
\int_0^{\tau} \omega_{\ell}(\tau - \tau',\rho_b)
\psimode_{\ell m}(\tau',\rho_b)d\tau'
+ \frac{s}{\rho_b^2}\psimode_{\ell m}(\tau,\rho_b).
\end{split}
\end{align}
The first term on the right-hand side is a convolution between
the solution $\psimode_{\ell m}(\tau,\rho_b)$ and the inverse Laplace
transform $\omega_{\ell}(\tau;\rho_b)$ of the kernel
\cref{eq:ROBC_kernelFD}. The boundary condition
\cref{eq:ROBC_kernelTD} is a key result. It determines exact
reduction of the infinite radial domain, introducing no spurious
reflections from the boundary, while still allowing for physical
backscatter. If $s=\ell=M=0$,
then the right-hand side of \cref{eq:ROBC_kernelTD} is zero and
the equation reduces to the familiar Sommerfeld condition.  

\section{Signal teleportation and asymptotic waveform evaluation}
\label{sec:asymptotic_waveform_evaluation_and_signal_teleportation}
This section describes radial teleportation: given the history
$\psimode_{\ell m}(\tau,\rho_1)$ of the solution at $\rho_1\leq\rho_b$,
we wish to reconstruct the outgoing signal
$\psimode_{\ell m}(\tau,\rho_2)$ at any larger radius $\rho_2 >
\rho_1$. When $\rho_2=\infty$, we refer to the reconstruction
as {\em asymptotic waveform evaluation} (AWE).
In electromagnetics the same idea is called near-to-far field
transformation~\cite{TafloveHagness2005,LiTafloveBackman2005,Muller2011},
emphasizing that data on a finite-radius surface can be mapped to the
radiative signal at larger radii. In numerical relativity, closely
related approaches are commonly referred to as waveform extrapolation,
where one fits the signal in powers of $1/\rho$ (in practice $1/r$) and
then extrapolates to $\rho_2=\infty$ (ideally to null infinity).
However, unlike waveform extrapolation, our procedure is exact for
solutions to \cref{eq:teuk0}, given initial data supported within
$\rho_1$. We follow the framework developed previously for the
flatspace~\cite{Field:2014cka} and
Regge-Wheeler/Zerilli~\cite{Benedict:2012kw} wave equations.

\subsection{Teleportation kernels}
The form \cref{eq:outgoing_FD} of the outgoing Laplace-domain
solution implies that
\begin{align}
\label{eq:FD_teleportation}
e^{\sigma(\rho_2^* - \rho_1^*)}\widehat{\psimode}_{\ell m}(\sigma, \rho_2)
& = \cfactor(\rho_1,\rho_2)\widehat{\phi}_{\ell}(\sigma,\rho_1,\rho_2)
\widehat{\psimode}_{\ell m}(\sigma,\rho_1)
+ \cfactor(\rho_1,\rho_2)\widehat{\psimode}_{\ell m}(\sigma,\rho_1),
\end{align}
where we have defined the frequency-domain teleportation kernel
\begin{align}\label{eq:phihatDEFN}
\widehat{\phi}_{\ell}(\sigma,\rho_1,\rho_2)
= -1 + \frac{W_\ell(\sigma\rho_2;\sigma)}{W_\ell(\sigma\rho_1;\sigma)},
\end{align}
as a $\sigma$-dependent function with $\rho_1$ and $\rho_2$ held fixed, and
\begin{align}
\cfactor(\rho_1,\rho_2) =
\left[\frac{\rho_1(\rho_2-1)}{\rho_2(\rho_1-1)}\right]^s.
\end{align}
The $-1$ in \eqref{eq:phihatDEFN} ensures that
$\widehat{\phi}_\ell(\sigma,\rho_1,\rho_2)$ decays as $|\sigma|\to\infty$;
its inverse Laplace transform is then a classical function.
Heuristically, the kernel 
$\widehat{\phi}_\ell(\sigma,\rho_1,\rho_2)$ captures the nontrivial part
of the propagation from $\rho_1$ to $\rho_2$, namely, the
modification of the wave beyond the retarded-time shift. Indeed,
when $\ell=s=0=M$ (so the potential $V=0$ and $\rho=\rho_*$),
the kernel vanishes, and \cref{eq:FD_teleportation} reduces to
$e^{\sigma(\rho_2 - \rho_1)}\hat{\psimode}_{00}(\sigma, \rho_2)
=\hat{\psimode}_{00}(\sigma, \rho_1)$. The corresponding time-domain
formula is then pure time-translation associated with
finite-speed propagation,
\begin{align}
\psimode_{00}(\tau + (\rho_2-\rho_1),\rho_2) =
\psimode_{00}(\tau,\rho_1).
\end{align}
With $\rho_2\to\infty$, the formula
\cref{eq:FD_teleportation} yields an exact evaluation of the
asymptotic waveform. 

To numerically compute the teleportation kernel,
we begin with the relationship \cite{Benedict:2012kw}
between the teleportation and radiation kernels:
\begin{align}\label{eq:TLPfromRBC}
\widehat{\phi}_\ell(\sigma,\rho_1,\rho_2) &  = -1 +
\underbrace{\exp\left[
\int_{\rho_1}^{\rho_2}\frac{\widehat{\omega}_\ell(\sigma, \eta)}{\eta}
d\eta\right]}_{W_\ell(\sigma\rho_2;\sigma)/W_\ell(\sigma\rho_1;\sigma)}.
\end{align}
With infrastructure for accurate numerical evaluation of
radiation kernels in place, numerical evaluation of
$\widehat{\phi}_\ell(\sigma,\rho_1,\rho_2)$ boils down to
accurate quadrature for the integral appearing in
\cref{eq:TLPfromRBC}. The methods we have used are
described in~\cite{Field:2014cka}.

Inverse Laplace transform of the frequency-domain
relation \eqref{eq:FD_teleportation} yields the time-domain
teleportation formula:
\begin{align}
\label{eq:teleportation_TD}
\begin{split}
\psimode_{\ell m}( & \tau + (\rho_2^* - \rho_1^*), \rho_2)
\\
& = \cfactor(\rho_1,\rho_2)
\int_0^{\tau} \phi_{\ell}(\tau - \tau',\rho_1, \rho_2)
\psimode_{\ell m}(\tau',\rho_1)d\tau'
+ \cfactor(\rho_1,\rho_2) \psimode_{\ell m}(\tau, \rho_1) \,.
\end{split}
\end{align}
Unlike ROBC which are imposed throughout an evolution,
signal teleportation \eqref{eq:teleportation_TD} can be
performed either on the fly or as post-processing,
provided $\psi_{\ell m}(\tau,\rho_1)$ has been suitably
recorded. Teleportation kernels
have been used to study wave phenomenology,
such as eccentricity-excited tails~\cite{Islam:2024vro}.

\section{Kernel profiles and compression}
\label{sec:kernel_eval_comp}
This section studies frequency-domain radiation kernels (FDRKs).
It exhibits and surveys FDRK profiles, and it sketches the process
by which we compress an FDRK. This section touches on similar
issues for teleportation kernels. Our work here relies on 
schemes for {\em numerical} evaluation of
$\widehat{\omega}_{\ell}(iy,\rho_b)$ along the imaginary axis
$\sigma=iy$, the inversion contour for the inverse Laplace
transform. These schemes are described both in
\cref{app:KernelEvaluation} and, more fully,
\cite{Lau:2004jn}. Through these schemes, we are able to
evaluate $\widehat{\omega}_{\ell}(iy,\rho_b)$ with near
double-precision machine accuracy, and better when using quad
arithmetic. Therefore, from the standpoint of this section,
we view $\widehat{\omega}_{\ell}(iy,\rho_b)$ as an exact
evaluation.

\subsection{Profile survey}
\label{subsec:ROBCProfiles}
Since $\ell = 2$ is the dominant mode in spherical-harmonic
decomposition of gravitational wave signals, we first consider
$\ell=2$ for different values of spin-weight $s$. When $s=0$,
the Bardeen-Press equation coincides with the spin-zero
Regge-Wheeler equation. \Cref{fig:differentspins} shows the real
and imaginary FDRK profiles for $\ell=2,\rho_b=60$ and $s=-2,0,2$.
Similar plots for spin-two Regge-Wheeler and Zerilli FDRKs appear
in \cite{Lau:2004jn,Lau:2004as}. The $s=-2$ FDRK exhibits sharper
features than seen in either the $s=0,2$ cases or the spin-two
Regge-Wheeler and Zerilli kernels (qualitatively similar to the
depicted $s=0$ kernel). In addition, for $s=-2$ the plots suggest
the existence of special frequencies $iy \simeq \pm i0.02$ at
which the FDRK $\widehat{\omega}_2(iy,60)$ is purely imaginary,
a feature not present in the Regge-Wheeler and Zerilli settings.
\begin{figure}[!htbp]
    \centering
    \includegraphics[width=0.75\columnwidth]{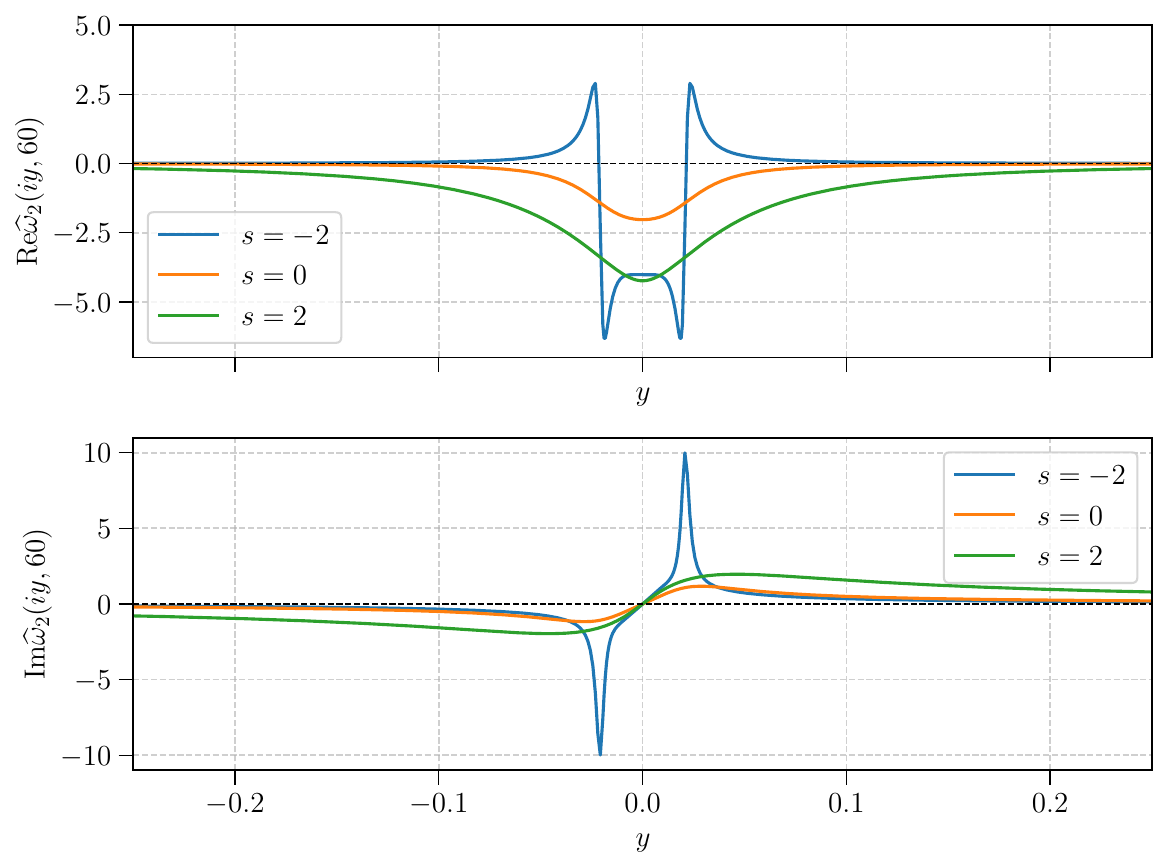}
    \caption{Profiles for the Bardeen-Press
    radiation kernel $\widehat{\omega}_2(iy,60)$
    for different spin-weights $s=-2, 0, 2$.
    The $s=2$ kernel has been scaled up by a factor
    of $100$ for the sake of comparison.}
    \label{fig:differentspins}
\end{figure}

Higher $(\ell,m)$ modes allow for accurate modeling of
gravitational-wave signals, especially in the self-force 
framework and, more generally, in black-hole perturbation
theory \cite{LISAConsortiumWaveformWorkingGroup:2023arg}.
\Cref{fig:ROBC_vs_sigma} shows the real and imaginary
profiles for an $\ell=64,\rho_b=60$ radiation kernel and
$s=-2,0,2$. While at first sight, these kernel profiles appear
qualitatively similar, important differences are evident.
For example, beyond a neighborhood of the origin, the real
part of the $s=-2$ kernel appears positive for all sufficiently
large $y$, unlike the other two cases. This feature
is also evident for $\ell=2$.

We have observed that the imaginary-axis profiles for an
$s=-2$ Bardeen-Press FDRK are markedly different than those
associated with the wave equations for metric perturbations.
Based on this observation, we hope to examine the analytic
structure of the outgoing solution $W_\ell(\sigma\rho_b;\sigma)$
in the left-half $\sigma$-plane. Similar examinations in
the Regge-Wheeler and Zerilli scenarios appear in \cite{Lau:2005ti}.
In that reference analytic representation of an FDRK involves
both a simple-pole sum and a branch-cut contribution.
\begin{figure}[!htbp]
    \centering
    \includegraphics[width=0.75\columnwidth]{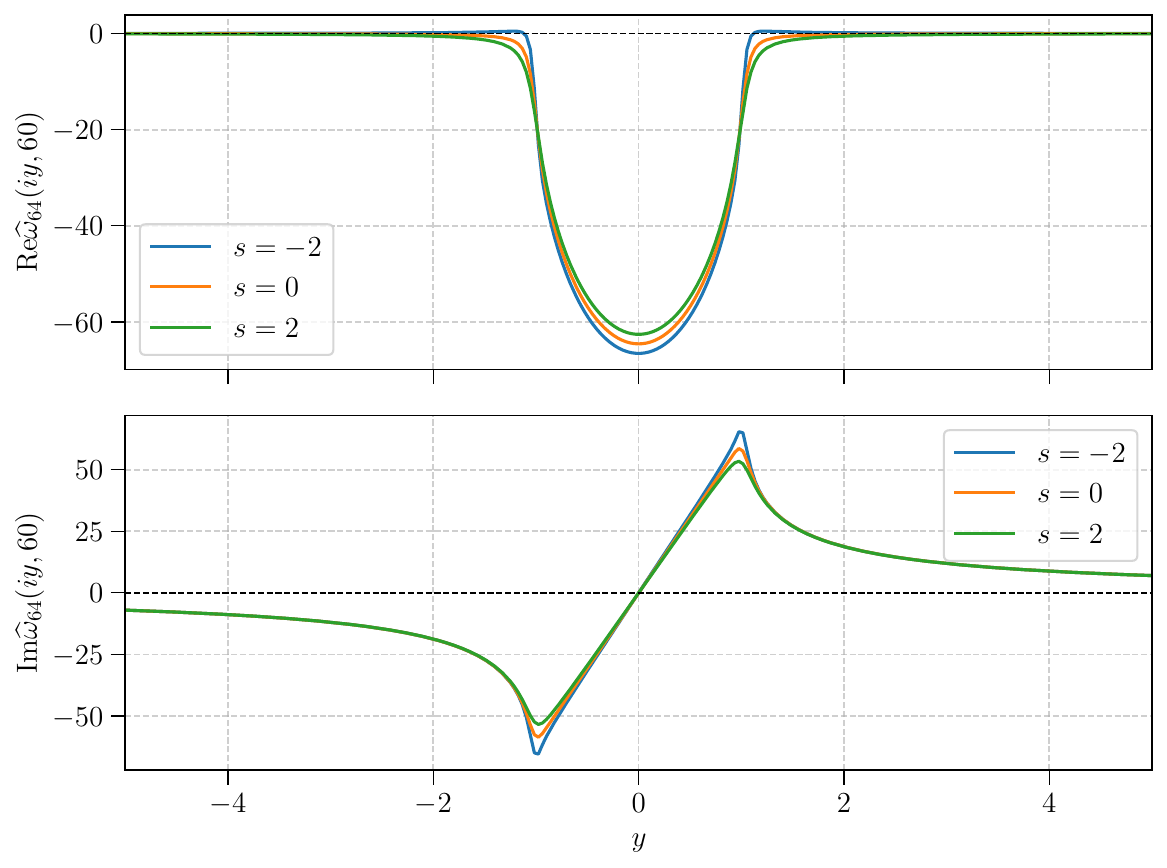}
    \caption{Bardeen-Press radiation kernel
    $\widehat{\omega}_{64}(iy,60)$
    for $s=-2, 0, 2$.}
    \label{fig:ROBC_vs_sigma}
\end{figure}

\subsection{Kernel compression: approximation by rational functions}
Here we summarize results from
\cite{alpert2000rapid,alpert2002nonreflecting,Lau:2004jn,xu2013bootstrap,BIZZOZERO2017118}.
\label{subsec:SumofPoles}
\subsubsection{Description}
The goal is rational approximation of a FDRK,
\begin{align}\label{eq:sumofpoles}
\widehat{\omega}_{\ell}(\sigma,\rho_b)\simeq
\widehat{\xi}_{\ell}(\sigma,\rho_b)
\equiv \sum_{k=1}^{d} \frac{\gamma_{k}}{\sigma-\beta_{k}} \,.
\end{align}
In practice, the pole locations $\beta_{k}$ for the approximation
lie in the left half-plane, as required for stability.
The complex numbers $\{\beta_{k},\gamma_{k} \}_{k=1}^d$
depend on the real and imaginary kernel profiles
(depicted in \cref{subsec:ROBCProfiles}); whence they inherit dependence
on $s$, $\ell$, and $\rho_b$, but we suppress this dependence. The
approximation should obey the uniform relative error bound
\begin{align}
\label{eq:SumofPoles_error}
\sup_{y\in\mathbb{R}}
\frac{\left|
\widehat{\omega}_{\ell}(iy,\rho_b)-
\widehat{\xi}_{\ell}(iy,\rho_b)
\right|}
{\left|\widehat{\omega}_{\ell}(iy,\rho_b)\right|}
< \varepsilon,
\end{align}
where $\varepsilon$ is a prescribed tolerance. This bound
guarantees that $\widehat{\xi}_{\ell}(\sigma,\rho_b)$
closely matches $\widehat{\omega}_{\ell}(\sigma,\rho_b)$
along the imaginary axis, which in turn
yields accurate approximation of the time-domain kernel
\begin{align}\label{eq:xiSOE}
\omega_{\ell}(\tau,\rho_b)\simeq \xi_{\ell}(\tau,\rho_b)
= \sum_{k=1}^{d} \gamma_{k}e^{\beta_{k}\tau}.
\end{align}
This connection is made precise by the error bounds
in~\cref{sub:l2_norm_bounds}.

Likewise, we approximate the freqency-domain
teleportation kernel \cref{eq:TLPfromRBC} as
\begin{align}\label{eq:xiHAT_T}
\widehat{\phi}_\ell(\sigma,\rho_1,\rho_2) \simeq
\widehat{\xi}_\ell^T(\sigma,\rho_1,\rho_2)
= \sum_{q=1}^d \frac{\gamma_q^T}{\sigma - \beta_q^T},
\end{align}
where the teleportation parameters
$\{\gamma_q^T, \beta_q^T\}_{q=1}^d$
also depend on $\ell$, $s $, $\rho_1$, $\rho_2$. This
dependence is suppressed throughout. Inverse Laplace
transform of \cref{eq:xiHAT_T} yields
\begin{align}\label{eq:xi_T}
\phi_\ell^T(\tau,\rho_1,\rho_2) \simeq
\xi_\ell^T(\tau,\rho_1,\rho_2) 
= \sum_{q=1}^d \gamma_q^T e^{\beta_q^T \tau},
\end{align}
an accurate approximation of the time-domain
teleportation kernel.
\begin{figure}[!htbp]
\centering
\includegraphics[width=0.75\columnwidth]{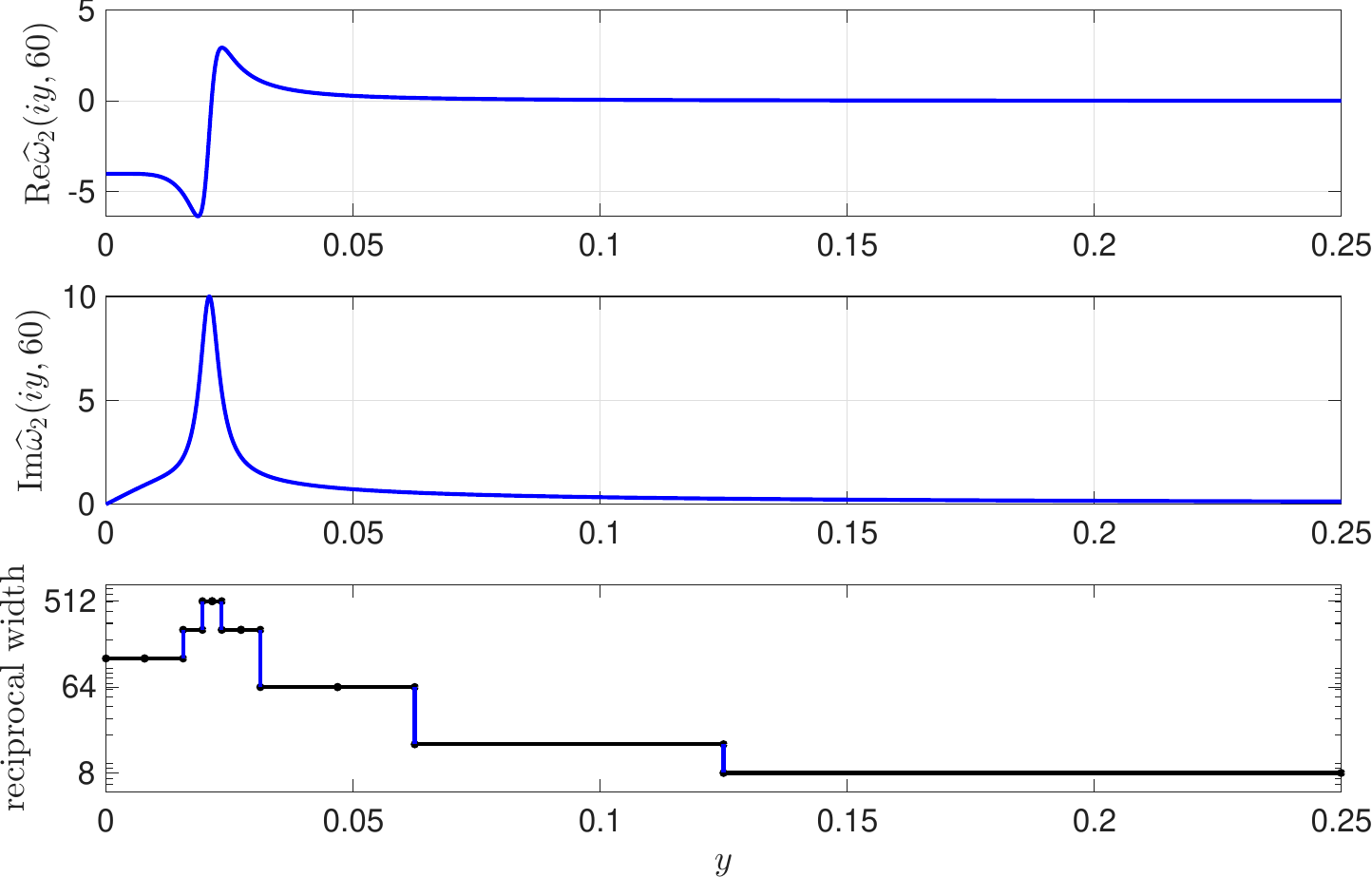}
\caption{
Binary-tree construction of Xu and Jiang. The bottom panel
shows $N_\mathrm{int}=10$ subintervals adapted to the variation
of $\widehat{\omega}_2(iy,60)$ for $s=-2$, with the reciprocal
subinterval width as the ordinate.
}
\label{fig:LevelComparison}
\end{figure}

\subsubsection{Compression algorithm}
\label{subsec:CompressionAlg}
Introduce a window $[-y_\mathrm{max},y_\mathrm{max}]$,
typically with $y_\mathrm{max}$ in the range $10$ to $100$. We 
partition this window into $2N_\mathrm{int}$ subintervals, with
$Q+1$ uniformly spaced points on each subinterval. We either choose
the partition ``by-hand'' to have more resolution at the origin or
use the binary-tree construction 
\cite{xu2013bootstrap,BIZZOZERO2017118} of Xu and Jiang.
With the Xu-Jiang
construction, variation of the kernel profiles is uniform on the
tree leafs. \Cref{fig:LevelComparison} shows a Xu-Jiang binary
tree adapted to the profiles for $\widehat{\omega}_2(iy,60)$, with
$y_\mathrm{max} = 0.25$ corresponding to the plots in
\cref{fig:differentspins}. The figure shows
$N_\mathrm{int}=10$ subintervals and the profiles are uniformly
smooth on each one. By ``smooth'' we mean that Cheyshev-polynomial
expansions of the profiles on each subinterval obey a convergence
criterion; see \cite{BIZZOZERO2017118} for details.

The resulting adaptive grid $\{y_j\}_{j=0}^{2Q N_\mathrm{int}}$
and corresponding trapezoidal integration weights
$\{\mu_j\}_{j=0}^{2Q N_\mathrm{int}}$ then define
the minimization problem
\begin{align}\label{fig:minProblem}
\argmin_{\{\beta_q,\gamma_q\}_{q=1}^d}
\sum_{j=0}^{2Q N_\mathrm{int}}
\mu_j
\Big|\widehat{\omega}_\ell(iy_j,\rho_b) -
\sum_{k=1}^d \frac{\gamma_k}{iy_j-\beta_k}\Big|^2,
\end{align}
where the $\widehat{\omega}_\ell(iy_j,\rho_b)$ here is really
our numerical evaluation at $y_j$ of the true kernel. In practice,
the number $d$ of poles is increased until the set
$\{\beta_q,\gamma_q\}_{q=1}^d$ satisfies
a suitably discretized version of \cref{eq:SumofPoles_error}
for given fixed $\varepsilon$. The process \cite{alpert2000rapid}
for solving the minimization problem \eqref{fig:minProblem} has
been rightly described as an ingenious use of the Gram-Schmidt
algorithm. See \cite{xu2013bootstrap,BIZZOZERO2017118} for
further refinements.

\begin{figure}[!htbp]
    \centering
     \includegraphics[width=0.75\columnwidth]{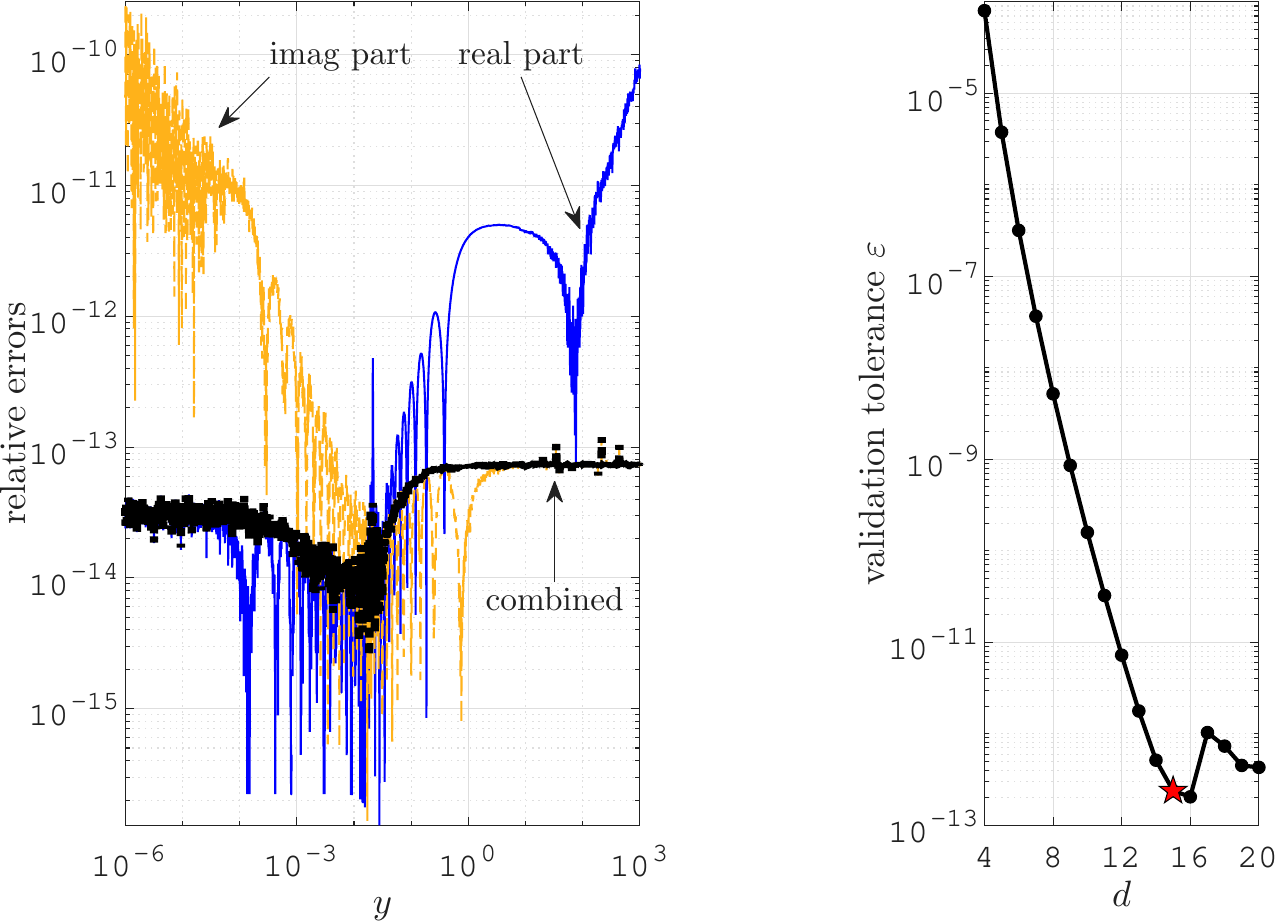}
    \caption{Left panel: Relative errors between the FDRK
             $\widehat{\omega}_2(iy,60)$ and its sum-of-poles
             approximation $\widehat{\xi}_2(iy,60)$. The thin blue curve is
             relative error in the real part, the thin dashed gold curve is the
             relative error in the imaginary part, and the solid black curve is
             the combined relative error; see the text.
             Right panel:  Relative errors $\varepsilon$ as computed by
             validation as a function of the number $d$ of poles in the
             approximation. The left panel corresponds to the
             red star in the right panel. For the star $\frac{1}{2}
             \varepsilon \simeq 1.19\times 10^{-13}$, the maximum relative
             (combined) error in the left panel.
    }
    \label{fig:Errcompression}
\end{figure}

\subsubsection{Accuracy validation of compressed kernels}
We have just described rational approximation of
$\widehat{\omega}_\ell(\sigma,\rho_b)$ via the Alpert, Greengard,
Hagstrom compression algorithm.
While the algorithm relies on an adaptive $y$-grid and associated
quadrature rule to produce the rational approximation, 
validation of \cref{eq:SumofPoles_error} is performed on a
auxiliary dense reference grid. This validation, an accuracy
assessment of the sum-of-poles approximation \cref{eq:sumofpoles},
compares the rational approximation against independent numerical
evaluation of $\widehat{\omega}_{\ell}(iy;\rho_b)$ on the imaginary axis. 
Here, numerical evaluation of both $\widehat{\omega}_{\ell}(iy,\rho_b)$
and $\widehat{\xi}_{\ell}(iy,\rho_b)$ has been carried out in
double precision.

\Cref{fig:Errcompression} shows the relative errors
between the Bardeen-Press
FDRK with $s=-2$, $\ell=2$, and $\rho_b=60$ and its $d=15$ rational
approximation. The comparison grid (with $y$-points labelled $Y_\alpha$)
used for the figure is described in the next paragraph.
Away from the origin, both the real and imaginary
parts of the kernel become small, but the combined relative error
levels off below $10^{-12}$. We interpret this as a consequence of
measuring relative error where the underlying quantity is close to
zero. Near the origin, the same effect appears for the relative error
of the imaginary part, where the kernel's imaginary part (denominator)
vanishes, but not for the real part. We have choosen the number of
poles $d=15$ so that the maximum relative compression error defined in
\cref{eq:SumofPoles_error} is below $10^{-12}$. By the error bound
\cref{eq:l2_estimate_rel_T}, this ensures that the time-domain
convolution error from the sum-of-poles approximation remains small. 

The right panel of \cref{fig:Errcompression} shows convergence of our
rational approximations as $d$ increases, with the $d=15$ case, shown as
a red star, corresponding to left-panel plot described in the last paragraph.
To obtain the right-panel plot, we have taken the
following steps: (i) generation of numerical profile values
$\{\widehat{\omega}_2(iy_j,60)\}_{j=0}^{2QN_\mathrm{int}}$ on a
$y$-grid determined by $y_\mathrm{max}=100$ and an $N_\mathrm{int}=19, Q=64$
binary tree; (ii) AGH compression of $d=4,5,\dots,20$ pole-sums,
each approximating the generated numerical FDRK; (iii) validation of
each pole-sum to estimate its $\varepsilon$ tolerance.
For validation, we evaluate the relative error on an auxiliary
$y$-grid $\{Y_\alpha\}_{\alpha=0}^{16384}$, where $Y_0=0$ and
$Y_1,\ldots,Y_{16384}$ are logarithmically spaced from $10^{-6}$ to
$10^3$. In comparison with \cref{eq:SumofPoles_error}, we then estimate
the tolerance by taking $\varepsilon$ to be twice the maximum sampled
relative error, with the extra factor of $2$ included as a safety margin.

\section{Implementation and time-domain numerical experiments}
\label{sec:NumExp}
This section tests implementation of radiation and teleportation
kernels in time-domain evolutions of the $s=-2$ Bardeen-Press
equation \cref{eq:teuk0-1p1-v3}. Our evolutions of
\cref{eq:teuk0-1p1-v3} stem from a first-order reduction
based on characteristic variables
$W_{\ell m}^{\pm} = -\partial_t \psimode_{\ell m} \pm \partial_{r_*} \psimode_{\ell m}$.
In terms of these variables the evolution equations are
\begin{align}\label{eq:1+1D}
\begin{split}
\partial_t \psimode_{\ell m} 
& = -\frac{1}{2}\Big(W^+_{\ell m} + W^-_{\ell m}\Big)\\
\partial_t W^+_{\ell m}
& = -\partial_{r_*}W^+_{\ell m}
    -\frac{2s\left(r-2M\right)}{r^2} W^-_{\ell m}
    +\frac{2sM}{r^2}W^+_{\ell m}
    + V(r)\psimode_{\ell m}\\
\partial_t W^-_{\ell m}
& = +\partial_{r_*}W^-_{\ell m}
    -\frac{2s\left(r-2M\right)}{r^2} W^-_{\ell m}
    +\frac{2sM}{r^2}W^+_{\ell m}
    + V(r)\psimode_{\ell m}.
\end{split}
\end{align}
While we continue to view the fields as functions
of $r$, for example as $\psimode_{\ell m}(t,r)$,
evolution of \cref{eq:1+1D} uses the tortoise coordinate.
The system \cref{eq:1+1D} indicates that $W^+$ propagates
left-to-right and $W^-$ right-to-left. With the
computational domain taken as the interval $[a,b]$,
boundary conditions are then needed for $W^+$ at $r_* = a$
and $W^-$ at $r_* = b$. To complete the system \cref{eq:1+1D},
we must specify $W^+(t, r_a)$ and $W^-(t, r_b)$.

Spatial discretization of \cref{eq:1+1D}
has been carried out both with a nodal discontinous Galerkin
(DG) method based on Legendre polynomials and a nodal Chebyshev
method. See \cite{Field:2009kk,Vishal:2023fye} for details on
the DG implementation. While the plots shown below have been
generated with the Chebyshev solver, all have been cross-checked
with the DG solver. We use classic fourth-order Runge-Kutta as
the time-stepper.

\subsection{Physical time-domain radiation and
teleportation kernels}
\label{subsec:evol-ROBC}
Assuming initial data supported compactly within the boundary
$r_* = b$, the exact outgoing boundary condition is the Laplace
convolution [cf.~\cref{eq:ROBC_kernelTD}]
\begin{align}\label{eq:lapconv}
W^-_{\ell m}(t,r_b) = -\frac{f(r_b)}{r_b}
\int_{0}^t \Omega_\ell(t - t', r_b)
\psimode_{\ell m}(t', r_b)dt' -
\frac{2Ms}{r_b^2} \psimode_{\ell m}(t, r_b),
\end{align}
where $f(r) = 1- 2M/r$ and the physical time-domain boundary
kernel is
\begin{align}
\Omega_\ell(t,r_b) = \frac{1}{2M}\omega_\ell(t/(2M),r_b/(2M)).
\end{align}
The time-domain kernel $\omega_\ell(\tau,\rho)$ on the right-hand
side also appears in \cref{eq:xiSOE}. The boundary condition
\eqref{eq:lapconv} is approximated via replacement of
$\Omega_\ell(t,r)$ by
\begin{align}
\Xi_\ell(t, r_b) = \frac{1}{2M}\xi_\ell(t/(2M),r_b/(2M))
= \sum_{q=1}^d \frac{\gamma_{q}}{2M}
e^{\frac{1}{2}\beta_q t/M},
\end{align}
with the last expression following from \cref{eq:xiSOE}.
As mentioned, the set $\{\gamma_q, \beta_q\}_{q=1}^{d}$
depends on $\ell$, $s$, and $\rho_b=r_b/(2M)$, but this dependence has
been suppressed.

With $\Xi_q(t,r_b) = \exp(\frac{1}{2}\beta_q t/M)$, the
constituent convolutions $(\Xi_q * \psimode_{\ell m})(t)$ can be
evaluated through integration of the ODE
\begin{align}
\frac{d}{dt} (\Xi_q * \psimode_{\ell m})(t) =
\frac{\beta_q}{2M}(\Xi_q * \psimode_{\ell m})(t)
+ \psimode_{\ell m}(t,r_b),\qquad
q=1,\dots,d.
\end{align}
In effect, the implementation adds $d$ auxiliary variables
$\{(\Xi_q * \psimode_{\ell m})(t): q=1,\dots, d\}$ at the boundary,
evolved along side the system \cref{eq:1+1D}.

Physical teleportation from $r_1$ to $r_2 > r_1$ has
the form [cf.~\cref{eq:teleportation_TD}]
\begin{align}\label{eq:asympwave}
\begin{split}
\psimode_{\ell m}&(t + (r_2^* - r_1^*), r_2)
\\
& = \cfactor(\rho_1, \rho_2)\int_{0}^t
\Phi_{\ell}(t-t', r_1, r_2)
\psimode_{\ell m}(t', r_1)dt' 
  + \cfactor(\rho_1, \rho_2)
\psimode_{\ell m}(t, r_1),
\end{split}
\end{align}
where in terms of the left-hand side of \cref{eq:xi_T}
the physical teleportation kernel is
\begin{align}
\Phi_{\ell}(t, r_1, r_2)
=\frac{1}{2M}\phi_\ell(t/(2M),r_1/(2M),r_2/(2M)).
\end{align}
The convolution \cref{eq:asympwave} is approximated via
replacement of $\Phi_{\ell}(t, r_1, r_2)$ by
\begin{align}
\Xi_\ell^T(t, r_1, r_2) =
\frac{1}{2M}\xi_\ell^T(t/(2M),r_1/(2M),r_2/(2M))
= \sum_{q=1}^d \frac{\gamma_q^T}{2M}e^{\frac{1}{2}\beta_q^T t/M},
\end{align}
with the last equality following from \cref{eq:xi_T}.
As mentioned, the teleportation parameters
$\{\gamma_q^T, \beta_q^T\}_{q=1}^d$
also depend on $\ell$, $s $, $\rho_1$, and $\rho_2$, but we
continue to suppress this dependence. Integration of ODEs
at the boundary affords a simple implementation of
the convolution. Teleportation may be performed as a post-processing
step, since the relevant ODEs are decoupled from the system
\cref{eq:1+1D}.

\subsection{Quasinormal ringing and decay tails}
\label{subsec:QN_and_tails_time_inf}
The following experiment tests our ROBC and signal teleportation
with a long-time evolution. Consider the $\ell=2,s=-2$
Bardeen-Press system \cref{eq:1+1D} with the following
Gaussian initial data:
\begin{align}\label{eq:inidat}
\psimode(0,r) = 0,
\qquad
W^-(0,r) = \frac{
e^{-\frac{1}{2}(r_* - \mu)^2/\sigma^2}}{\sqrt{2\pi \sigma^2}},
\qquad
W^+(0,r) = W^-(0,r),
\end{align}
where $\mu=30M$ and $\sigma=10M$. Notice that the Gaussian profile is
in tortoise coordinate $r_*$ not $r$. Here, $\Psi(t,r) = 
\Psi_{2m}(t,r)$ for any azimuthal index $|m| \leq \ell = 2$, but for
this experiment we suppress all angular indices. We place the inner
boundary at $r_* = a = -150M$ and the outer boundary at $r_b=120M$,
corresponding to $r_* = b \simeq 128.16M$. The spatial
domain is divided into 37 equal-width subintervals, each with
32 Chebyshev nodal points. We impose a Sommerfeld
condition at the inner boundary and a compressed-kernel approximation
of \cref{eq:lapconv} at the outer boundary. \Cref{tab:ROBC60table}
lists the relevant compressed radiation kernel. Evolving
beyond $t=5000M$ with a time-step size
$\Delta t \simeq 4.0097\times 10^{-3}M$, we record
$|\psimode(t,120M)|$ at the outer boundary, with this history
(orange curve) shown in the top panel of
\cref{fig:tailandLPI_near}.  After an initial phase of
quasinormal-ringing lasting until $t \simeq 350M$, the solution
transitions to a Price tail with power-law decay $t^p$. 
To estimate $p$ empirically, we compute a local power index
(LPI) using a first-order finite-difference approximation to
$p = \partial_{\log t}\log(|\psimode(t,120M)|)$.
As seen in the bottom panel of \cref{fig:tailandLPI_near},
for late times the local power index approaches the Price-law
prediction $t^{-(2\ell+3)} = t^{-7}$, before eventual
contamination by round-off error when $t \gtrsim 2200M$.
The dotted, horizontal, red line corresponds to exact $t^{-7}$ decay.
\begin{figure}[!htbp]
    \centering
    \includegraphics[width=0.9\columnwidth]{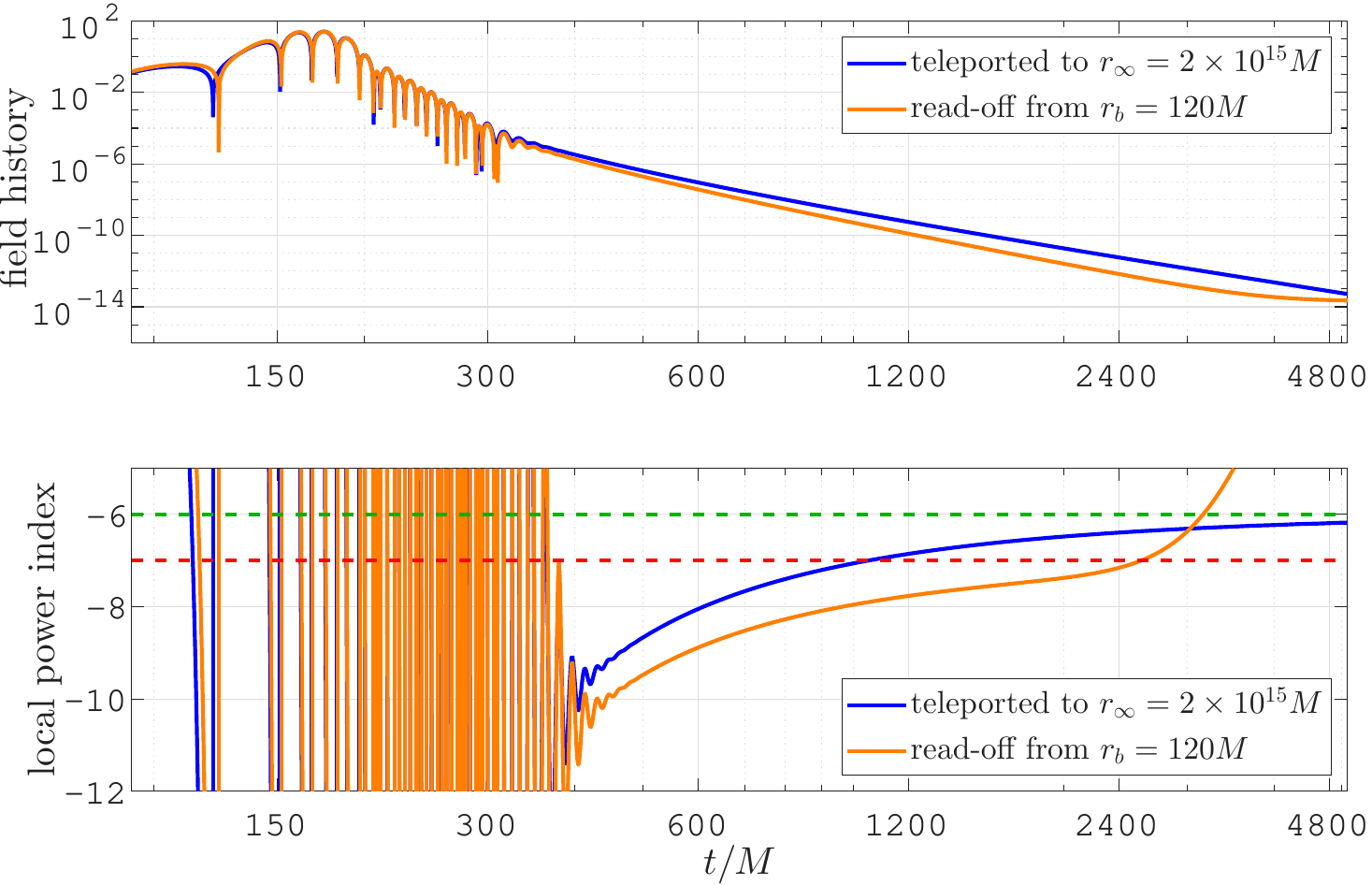}
    \caption{Top panel. Quasinormal ringing and decay tails. Each curve
    corresponds to power law decay. The time shift for the asymptotic
    waveform has not been included. At late times $t\gtrsim 2200$
    the solution is dominated by numerical error.
    Bottom panel. Decay rates for signals
    recorded at $r_1 = r_b = 120M$ and teleported to
    $r_2 = r_\infty = 2\times 10^{15}M$.}
    \label{fig:tailandLPI_near}
\end{figure}

To approximately recover the waveform reaching null infinity
from the simulation described in the last paragraph, we
teleport the history $\psimode(t,r_b=120M)$ to a large radius,
here chosen as $r_\infty = 2\times 10^{15}M$. With $r_1=r_b$
and $r_2 = r_\infty$, this entails
approximation of the convolution formula \cref{eq:asympwave}
to define $\psimode_{\infty}(t) \simeq
\psimode(t + (r_{\infty}^* - b), r_{\infty})$, where we
continue to suppress the angular indices.
The top panel of \Cref{fig:tailandLPI_near} also shows our computed
asymptotic waveform $|\psimode_{\infty}(t)|$ as a time-series (blue curve),
with the corresponding local power index also shown in the bottom panel.
For the teleported signal we have an LPI approaching $-6$ (the horizontal,
dotted, green line), in agreement with the theoretical prediction for
tails at null infinity; see \cite{Price:1972pw,donninger2011proof,Zenginoglu:2008uc}
and table II in~\cite{PhysRevD.100.104025}. This experiment demonstrates that 
a combination of ROBC and signal teleportation captures quasinormal ringing 
and late-time tails at null-infinity.
\begin{figure}[!htbp]
    \centering
    \includegraphics[width=0.85\columnwidth]{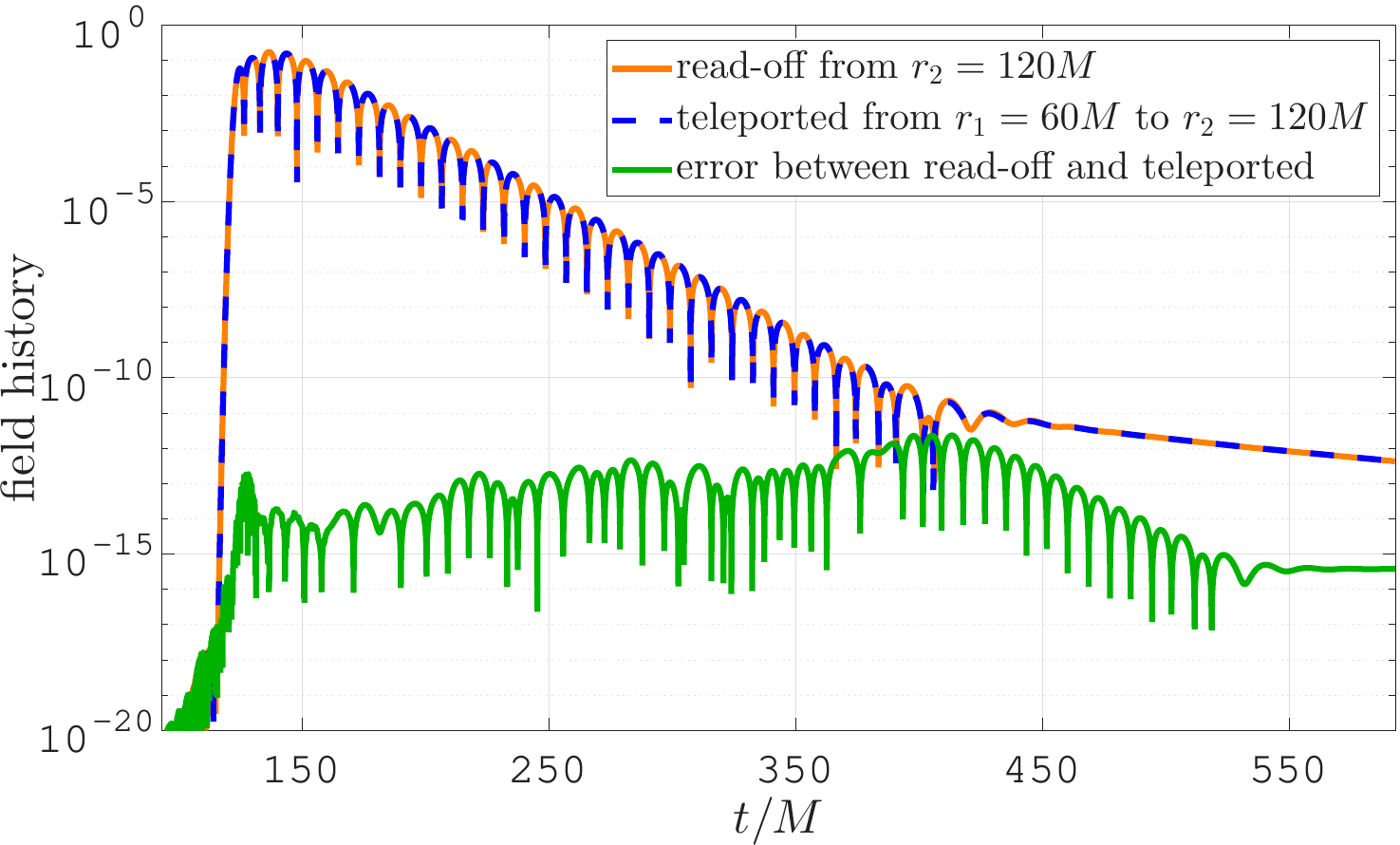}
    \caption{Comparison of the signal
             $|\psimode_{120M}^\mathrm{rec}(t+\Delta r_*)|$
             recorded at $r_2 = 120M$ with time delay (blue)
             and the signal
             $|\psimode^\mathrm{tel}_{60M\rightarrow 120M}(t)|$ teleported
             from $r_1=60M$ to $r_2=120M$ (orange). See the text. The
             absolute error between the signals is also shown (green).}
    \label{fig:teleport_r60tor120}
\end{figure}    

\subsection{Pulse teleportation}
\label{subsec:PulseTeleportation}
This experiment also evolves the $s=-2,\ell=2$ Bardeen-Press system
\eqref{eq:1+1D}, starting with the pulse initial data \cref{eq:inidat},
only now with $\mu=4M$ and $\sigma =M$. We continue to suppress all
angular indices. Using our nodal Chebyshev method, we perform
evolutions for two choices of the domain $[a,b]$. For both choices
$a=-150M$, with the outer boundary
$b=b(r_b)$ fixed by either $r_b=60M$ or $r_b=120M$. We use $37$
subintervals for the short domain $[-150M,b(60M)]$, and $45$ subintervals
for the long domain $[-150M,b(120M)]$. Both evolutions adopt degree-$31$
polynomials on each subinterval and place a Sommerfeld boundary condition
$r_* = a$. We take \cref{eq:lapconv} as the boundary condition at either
$r_b = 60M$ or $r_b = 120M$, with implementation based on the
relevant compressed-kernel from either \cref{tab:ROBC30table}
or \ref{tab:ROBC60table}. For each evolution, we first evolve to
$t_\mathrm{mid} = b-r_*(30M)$, then record the solution over an
observation window of length $t_\mathrm{obs} = 500M$. The observation
window uses $n_\mathrm{steps}=2247672$ time steps, while the initial
evolution uses (the integer closest to) $n_\mathrm{steps}
t_\mathrm{mid}/t_\mathrm{obs}$. This procedure aligns the solution
teleported from $b(60M)$ to $b(120M)$ exactly with the solution recorded at $b(120M)$.

Throughout the observation window on the short domain, we record
the history $\psimode(t,60M)$ as a time series, also numerically teleporting
this series to obtain
\begin{align}
\psimode^{\mathrm{tel}}_{60M\rightarrow 120M}(t) \simeq
\psimode(t + \Delta r_*,120M).
\end{align}
Here, the difference $\Delta r_* = b(120M)-b(60M) \simeq 61.42M$
accounts for the time delay between the outer boundaries of the short and
long domains: $r_*=b(60M) \simeq 66.73M$ and $r_* = b(120M) \simeq 128.16M$,
respectively. This $60M\rightarrow 120M$ teleportation is achieved with
the convolution \cref{eq:asympwave}, taking $r_1=60M$ and $r_2=120M$
and \cref{tab:TLP30to60table} for its approximation. Throughout the
observation window on the long domain, we record the history
$\psimode^{\mathrm{rec}}_{120M}(t) =\psimode(t,120M)$ as a time series.
The experiment is then to compare
$\psimode^{\mathrm{tel}}_{60M\rightarrow 120M}(t)$ with
$\psimode^{\mathrm{rec}}_{120M}(t+\Delta r_*)$. This comparison
is achieved without interpolation through the process described in the last
paragraph, resulting in a time-step size
$\Delta t = t_\mathrm{obs}/n_\mathrm{steps} \simeq 2.2245 \times 10^{-4}M$
over each observation window.
\Cref{fig:teleport_r60tor120} shows that the absolute error 
$|\psimode^{\mathrm{tel}}_{60M\rightarrow 120M}(t)
-\psimode^{\mathrm{rec}}_{120M}(t+\Delta r_*)|$ is orders
of magnitude smaller than $|\psimode^{\mathrm{rec}}_{120M}(t+\Delta r_*)|$.

\section{Conclusion}
\label{sec:conclusion}
We have presented the first time-domain
evolutions of the Bardeen-Press equation with ROBC imposed at a
finite radius. Exact radiation (or transparent) boundary conditions
achieve domain reduction of an open wave problem.
Without exact ROBC, spurious
reflections off the outer boundary contaminate the interior
solution. While these remarks apply broadly to wave problems,
transparent outer boundary conditions are indispensable for
long-time evolutions of the Bardeen-Press equation, a setting
where spurious late-time growth has been reported \cite{Long:2021ufh}
on $\mathcal{O}(1000M)$ timescales. By contrast, solutions to
the Regge-Wheeler equation for perturbations of Schwarzschild exhibit
no such growth. For the Bardeen-Press case, such growth is partly a 
numerical artifact of
large spatial domains, so ROBC are needed to obtain faithful evolutions
on shorter domains.

We have also extended to the Bardeen-Press equation the asymptotic
waveform evaluation procedure developed in \cite{Benedict:2012kw},
allowing recovery of the far-field signal from a time series recorded at
finite radius. We demonstrated that this near-to-far field
transformation can be implemented accurately and simply. For the
dominant $\ell=2,s=-2$ mode, our numerical results recover the expected
late-time decay $t^{-6}$. This approach does not require compactification
or conformal completion. Instead, the far-field signal is recovered in
post-processing through evaluation of the Laplace convolution
\cref{eq:asympwave}. We also demonstrated effective signal teleportation
between two finite radial locations.

The same strategy applies to both nonlocal convolutions considered here.
First, the convolution is written as a product in the Laplace-frequency
domain, which identifies the corresponding frequency-domain kernel.
Next, this kernel is approximated by a sum of simple poles in the
left half-plane, with a prescribed tolerance along the inversion
contour. Inverting the product between the approximate kernel and the
solution then yields a time-domain convolution in terms of damped
exponentials. Depending on the setting, this procedure gives ROBC,
finite-radius teleportation, or asymptotic waveform evaluation. A
similar strategy may also apply to ROBC and teleportation for Kerr
perturbations governed by the Teukolsky equation. In that sense, the
Bardeen-Press case is a first step in that direction.

For the $\ell=2$ cases considered here, implementing transparent
boundary conditions through a kernel convolution costs about as much as
adding only 20 points to the spatial grid. The method requires no
coordinate transformations and applies at any radius beyond the support
of the initial data and sources. The FDRK developed here should also be
useful for enforcing exact outgoing conditions in frequency-domain
solvers, with similar applications to frequency-domain teleportation
kernels. In that setting, one would first apply a Wick rotation before
using an FDRK.

The main remaining challenge is reducing the offline cost of
generating teleportation and AWE tables. 
Kernel generation via \cref{eq:TLPfromRBC} is
expensive \cite{Benedict:2012kw}, often requiring more than $10^{13}$
floating-point operations, and appears to be costlier for
Bardeen-Press than for Regge-Wheeler or Zerilli kernels. For metric
perturbations and a given imaginary frequency $\sigma$, the real and
imaginary quadratures in \cref{eq:TLPfromRBC} involve values of the same
sign, whereas for the $s=-2$ Bardeen-Press equation they appear to
involve both positive and negative values. We are studying alternative
strategies for computing these kernels more efficiently.

\section*{Acknowledgments}
We are grateful for discussions with 
Manas Vishal, Jaryd Domine, Thomas Hagstrom, Gaurav Khanna, Oliver Long,
Anil Zenginoglu, and Anthony Cognilio. We thank Matias Shedden for
rewriting Fortran kernel compression routines in Python.
Some numerical computations were performed on the UMass-URI UNITY
supercomputer supported by the Massachusetts Green High Performance Computing
Center (MGHPCC). We also thank the UNM Center for Advanced Research Computing,
supported in part by the National Science Foundation, for providing the high
performance computing resources used in this work.
The authors acknowledge support of NSF Grant DMS-2309609 and the Office of
Naval Research/Defense University Research Instrumentation Program (ONR/DURIP)
Grant No.~N00014181255.
\appendix

\section{Normal form and monodromy}
\label{app:ConfluentHeun}
The ODE \cref{eq:BardeenPress-Phihat} has the form
\begin{align}\label{eq:Heuninf}
\frac{d^2H}{d\rho^2} +
\Big[
\beta + \frac{\gamma-2\alpha}{\rho}
+ \frac{\delta}{\rho-1}\Big]\frac{dH}{d\rho}
+\Big[
\frac{\alpha(1+\alpha-\gamma)}{\rho^2}
+ \frac{q + \alpha (\beta-\delta)}{\rho(\rho-1)}
\Big] H = 0,
\end{align}
with $\alpha = 1+s$, $\beta = -2\sigma$, $\gamma = 1-s$, $\delta =
1+s-2\sigma$, and $q' \equiv q + \alpha (\beta-\delta) = -\ell(\ell+1)-(s+1)^2$.
For $\sigma\neq 0$ the generalized Riemann scheme
\cite{slavyanov2000special} for \cref{eq:Heuninf} is
\begin{align}\label{eq:generalscheme}
\left[
\begin{array}{cccl}
1               &  1         &   2                    &        \\
0               &  1         & \infty                 & ;\rho  \\
\alpha          &  0         &  0                     & ;q'    \\
1-\gamma+\alpha & 1-\delta   & \gamma+\delta-2\alpha  &        \\
                &            &  0                     &        \\
                &            & -\beta                 &
\end{array}
\right]
=
\left[
\begin{array}{cccl}
1               &  1         &   2                    &        \\
0               &  1         & \infty                 & ;\rho  \\
\lambda_0       & \lambda_1  & \lambda_\infty         & ;q'    \\
\lambda_0'      & \lambda_1' & \lambda_\infty'        &        \\
                &            & \kappa_\infty          &        \\
                &            & \kappa_\infty'         &
\end{array}
\right].
\end{align}
Each of the first three columns of the scheme gives information about
the singular points; the second row is location and the first row
type (1 corresponds to a regular singular point, and 2 to an irregular
singular point of confluent type
\cite{wasow1987asymptotic,erdelyi1956asymptotic}). The remaining
information collects the indicial exponents and Thom\'{e} parameters (see
\cite{erdelyi1956asymptotic} and below).
The factor $q$ (or $q'$) is the {\em accessory parameter}, reflecting
the fact that the equation is only determined by its monodromy up to a free
parameter. Equation \cref{eq:Heuninf} is a normal form of the confluent Heun
equation (with one indicial exponent vanishing at both $1$ and $\infty$),
although not the standard one. The equation obeyed by
$G(\rho) = \rho^{-\alpha}H(\rho)$ is the standard normal form of the
confluent Heun equation (with one indicial exponent vanishing at
both $0$ and $1$).

The scheme \cref{eq:generalscheme} lists the asymptotic
forms of the local solutions about the singular points.
In particular, the indicial exponents $\lambda_1 = 0$ and
$\lambda_1'=2\sigma-s$ correspond to Frobenius-type solutions
\cite{wasow1987asymptotic}
$\widehat{\Phi}_\ell^\mathrm{up} \sim 1$
and $\widehat{\Phi}_\ell^\mathrm{down}
\sim (\rho - 1)^{2\sigma - s}$ as $\rho \rightarrow 1$.
The transformation \eqref{eq:CHtrans} then gives
$\widehat{\psimode}_\ell^\mathrm{down} \sim
\rho^{-s}(\rho - 1)^{2\sigma}e^{-\sigma\rho_*}
\sim e^{-2\sigma} e^{\sigma\rho_*}$,
matching the first equation in \eqref{eq:BarackSpecialSolutions},
upon rescaling $\widehat{\psimode}_\ell^\mathrm{down}\rightarrow
e^{2\sigma}\widehat{\psimode}_\ell^\mathrm{down}$
and passing from dimensionless to physical variables. For the
point at $\infty$ we seek solutions $\widehat{\Phi}_\ell
\sim \rho^{-\lambda} e^{\kappa \rho}
\sum_{n=0}^\infty c_n \rho^{-n}$.
According to Erd\'{e}lyi \cite{erdelyi1956asymptotic}, such
expansions are due to Thom\'{e}. From \cref{eq:generalscheme}
we have $\kappa_\infty = 0, \lambda_\infty = 0$ and
$\kappa_\infty' = 2\sigma, \lambda_\infty' = -2\sigma - 2s$.
The local solutions are then $\widehat{\Phi}^\mathrm{out}_\ell \sim 1$ 
and $\widehat{\Phi}^\mathrm{inc}_\ell \sim
\rho^{2\sigma+2s} e^{2\sigma \rho}$ as $\rho\rightarrow\infty$.
Then $\widehat{\psimode}_\ell^\mathrm{out} \sim e^{-\sigma\rho_*}$
from \eqref{eq:CHtrans}, confirming the last equation in
\eqref{eq:BarackSpecialSolutions}. For $\sigma\neq 0$ write
\begin{align}\label{eq:Thome2}
\widehat{\Phi}^\mathrm{out}_\ell(\sigma,\rho)
\sim \sum_{n=0}^\infty \frac{c_n(\sigma)}{\rho^n}.
\end{align}
Then with $L \equiv \ell(\ell+1)-s(s+1) = (\ell-s)(\ell+s+1)$,
$S = (s+1)(2s+1)$, and $c_0=1$, we first find $c_1 =
\frac{1}{2}L/\sigma$. Thereupon, we find the three-term recursion
\begin{align}
2\sigma (n+2)c_{n+2}
= \big[L - (n+1)(n + 2s + 2)\big] c_{n+1}
+ [n(n+3s+2)+S]c_{n}.
\end{align}

When $\sigma =0$, \eqref{eq:BardeenPress-Phihat} becomes
\begin{align}\label{eq:BardeenPress-Phihat_0}
\begin{split}
\frac{d^2\widehat{\Phi}_\ell}{d \rho^2}
& +\Big[-\frac{3s+1}{\rho} + \frac{1+s}{\rho-1}\Big]
\frac{d \widehat{\Phi}_\ell}{d\rho}
\\
& + \Big[
\frac{(s+1)(2s+1)}{\rho^2}
-\frac{(s+1)^2 + \ell (\ell + 1)}{\rho (\rho-1)}
\Big]\widehat{\Phi}_\ell = 0.
\end{split}
\end{align}
Now the point at $\infty$ is a regular singular point, and
\eqref{eq:BardeenPress-Phihat} is an incarnation of the
Gauss-hypergeometric equation, although it is not quite in
the standard normal form. We seek Frobenius-type solutions 
$\widehat{\Phi}_\ell = \rho^{-\lambda} \sum_{n=0}^\infty a_n\rho^{-n}$.
The possible $\lambda$ are $\ell-s$ and $-1-\ell-s$. We expect
$\ell \geq |s|$, and so choose $\lambda = \ell-s$.
Standard calculations yield
\begin{align}\label{eq:zerosig-aRecur}
a_0=1,\quad
(n+1)(2\ell + n + 2)a_{n+1} = (\ell + n + 1)(\ell + n + s + 1)a_n
\text{ for } n \geq 1.
\end{align}

\section{Evaluation of the FDRK}
\label{app:KernelEvaluation}
This appendix considers numerical evaluation of
$\widehat{\omega}_{\ell}(iy,\rho_b)$ for $y\geq 0$. Through
\cref{eq:TLPfromRBC}, it also pertains to evaluation of
teleportation kernels. Evaluation for negative $y$-values
follows from the respective even- and odd-parity of
$\mathrm{Re}\widehat{\omega}_\ell(iy,\rho_b)$ and
$\mathrm{Im}\widehat{\omega}_\ell(iy,\rho_b)$.
The parity of these profiles follows from
\cref{eq:Riccatirho}. For fixed $y_\mathrm{cut} > 0$
evaluation falls into three cases:
(i) $0 < y_\mathrm{cut} < y$,
(ii) $0 < y \leq y_\mathrm{cut}$, and
(iii) $y=0$.
Case (i) relies on a two-path integration scheme over a
straight segment in the complex $z$-plane, where
$z=\sigma\rho$. Case (ii) relies on a three-path integration
scheme in the complex $\rho$-plane.
Respectively, these cases are described
further in
\cref{appsubsec:two-path,appsubsec:three-path}.
Case (iii) relies on the explicit formula
\begin{align}
\widehat{\omega}_\ell(0,\rho) = \rho \partial_\rho
\log(\sum_{n=0}^\infty a_n \rho^{-(\ell-s+n)}) 
\rightarrow -\ell + s \text{ as } \rho\rightarrow\infty,
\end{align}
defined in terms of recursion \cref{eq:zerosig-aRecur}
and suitable truncation. \Cref{subsec:EvalKernels} examines
the accuracy of schemes (i) and (ii).

\subsection{Two-component path integration for case (i)}
\label{appsubsec:two-path}
Write the ODE \eqref{eq:BardeenPress-Phihat} as
\begin{align}\label{eq:ODEzee}
\frac{d^2\widehat{\Phi}_\ell}{dz^2}
+ \frac{P(z/\sigma,\sigma)}{\sigma} \frac{d\widehat{\Phi}_\ell}{dz}
+ \frac{Q(z/\sigma)}{\sigma^2}\widehat{\Phi}_\ell = 0,
\end{align}
where $z=\sigma\rho$. Likewise, write \eqref{eq:Riccatirho} as
\begin{align}\label{eq:Riccatizee}
\frac{d\widehat{\omega}_\ell}{dz}
+ \frac{\widehat{\omega}_\ell^2}{z}
+ \Big[\frac{P(z/\sigma,\sigma)}{\sigma}
- \frac{1}{z}\Big]\widehat{\omega}_\ell
+ z\frac{Q(z/\sigma)}{\sigma^2} = 0.
\end{align}
With the asymptotic expansion \eqref{eq:Thome2} in
$\rho$, we get an asymptotic expansion
$\widehat{\Phi}_\ell^\mathrm{out} \sim \sum_{n=0}^\infty
d_n(\sigma) z^{-n}$ for the
outgoing solution viewed as a function of $z=\sigma\rho$,
using $d_n(\sigma) = c_n(\sigma) \sigma^n$. Suitable
truncation of this $z$-expansion is then used to
approximate the initial data
$\{\widehat{\Phi}_{\ell}^\mathrm{out}|_1,
d\widehat{\Phi}_{\ell}^\mathrm{out}/dz|_1\}$,
where $|_1$ indicates evaluation at a
point $z_1 = \mathsf{scale}_1 +\mathrm{i}y \rho_b$, where
$\mathsf{scale}_1 \gg \mathsf{scale}_2 > 0$ is a large
positive number. With this initial data, we then integrate
\eqref{eq:ODEzee} to an intermediate $z$-value 
$z_2 = \mathsf{scale}_2 +\mathrm{i}y \rho_b$ over the blue
path shown in the left panel of
\cref{fig:TwoAndThreeComponentPaths}. As described
in \cite{Lau:2004jn}, over this integration the second solution
$\widehat{\Phi}_\ell^\mathrm{inc}$ (initially present due
to inexactness in the initial data) is exponentially
suppressed. The result is initial data $\widehat{\omega}_\ell|_2$
for the Riccati equation good to machine accuracy. Then
\eqref{eq:Riccatizee} is integrated over the red path,
terminating at $z_b = \mathrm{i}y \rho_b$ with the kernel
value. For $0 < y \ll 1$ this method breaks down, since the
terms in \eqref{eq:Riccatizee} become singular, although
their combination is regular.
\begin{figure}[!htbp]
\begin{center}
\includegraphics[width=7.75cm]{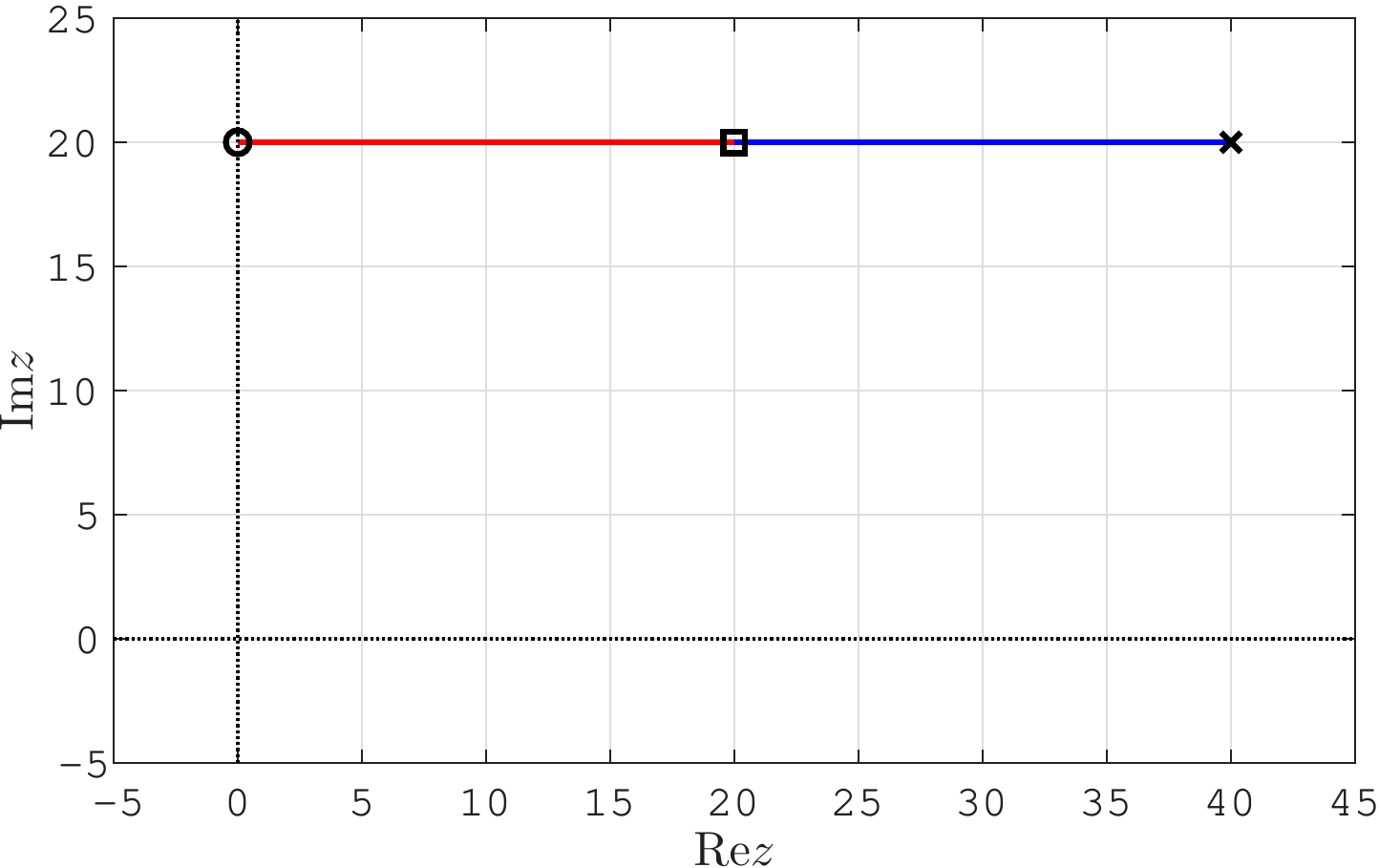}
\hspace{0.5cm}
\includegraphics[width=4.05cm]{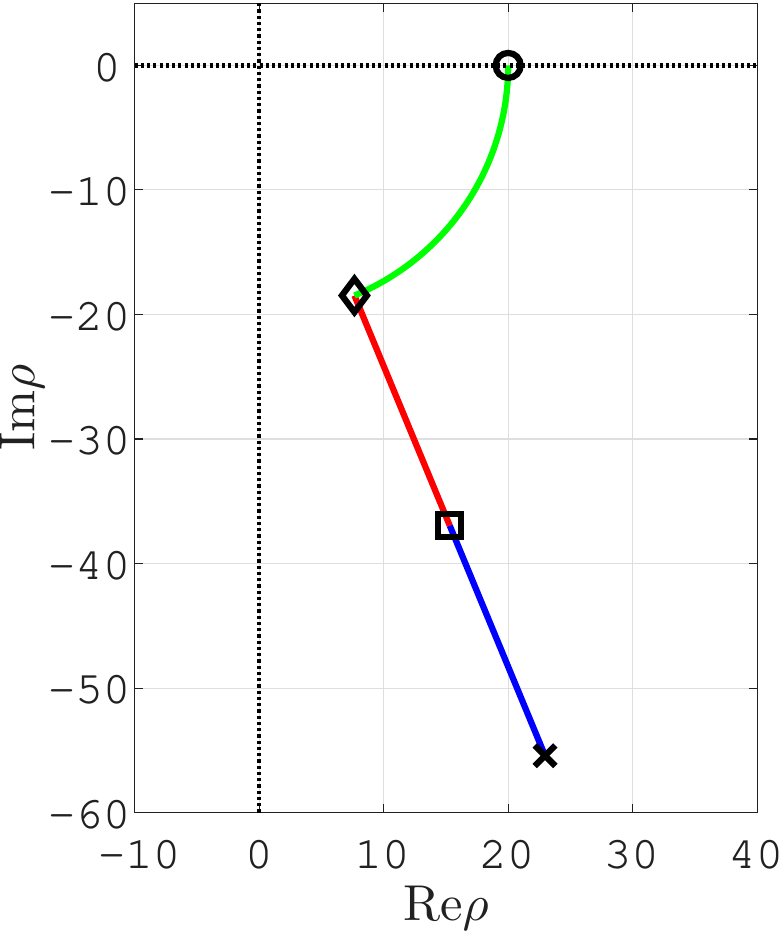}
\caption{Left panel: Representative
two-component path in $z$-plane.
Right panel: Representative
three-component path in $\rho$-plane.
\label{fig:TwoAndThreeComponentPaths}}
\end{center}
\end{figure}
\subsection{Three-component path integration for case (ii)}
\label{appsubsec:three-path}
We complexify $\rho = \chi e^{\mathrm{i\psi}}$, with $\psi$
initially chosen such that $\mathrm{Re}(\sigma\rho) =
\mathrm{Re}(\mathrm{i}y\rho) > 0$. For the sake of definiteness,
with $\sigma = \mathrm{i}y$ for $y >0$, take $\psi = -\frac{1}{4}\pi$.
Next, for positive numbers $\mathsf{scale}_1 > \mathsf{scale}_2 >
\mathsf{scale}_3$, define $\rho_1 = \mathsf{scale}_1e^{\mathrm{-i\pi/4}}$,
$\rho_2 = \mathsf{scale}_2e^{\mathrm{-i\pi/4}}$,
and $\rho_3 = \rho_b e^{\mathrm{-i\pi/4}}$.

With the asymptotic expansion \eqref{eq:Thome2} in
$\rho$, we get initial data approximating
$\{\widehat{\Phi}_{\ell}^\mathrm{out}|_1,
d\widehat{\Phi}_{\ell}^\mathrm{out}/d\chi|_1\}$,
where $|_1$ indicates evaluation at $\rho_1$.
Next, we integrate the ODE
\begin{align}
\frac{d^2\widehat{\Phi}_\ell}{d\chi^2}
+ e^{\mathrm{i}\psi}P(\chi e^{\mathrm{i\psi}},\sigma)
\frac{d\widehat{\Phi}_\ell}{d\chi}
+ e^{\mathrm{2i}\psi}Q(\chi e^{\mathrm{i\psi}})\widehat{\Phi}_\ell = 0
\end{align}
from $\chi = \mathsf{scale}_1$ to $\chi = \mathsf{scale}_2$ 
(the blue path in right panel of \cref{fig:TwoAndThreeComponentPaths}).
The Riccati equation \eqref{eq:Riccatirho} can then be
viewed as an equation in $\chi$; we integrate this $\chi$-Riccati
equation from $\chi = \mathsf{scale}_2$ to $\chi = \rho_b$ (red path).
The last step is consider \eqref{eq:Riccatirho} as a
Riccati equation in $\psi$, integrating from
$\psi = \psi_1$ to $\psi = 0$ (green path).

\subsection{Accuracy in kernel evaluation}
\label{subsec:EvalKernels}
The two-path scheme (i) breaks down for
$0 < y \ll y_\mathrm{cut}$ and the three-path (ii) scheme breaks
down for $y \gg y_\mathrm{cut}$; see \cite{Lau:2004jn}. The
experiment here is to compare the two schemes on a neighborhood
of $y_\mathrm{cut}$. Precisely, we consider
evaluation of $\widehat{\omega}_2(iy,60)$,
with $y_\mathrm{cut} = 0.005$ and the comparison taking place
over a logarithmically spaced partition of the 
interval $[10^{-5},10^1]$. Both integration schemes use
$\mathsf{scale}_1=1000$, $\mathsf{scale}_2 = 100$, and the fifth-order
Runge-Kutta-Fehlberg method, with 131072 steps on straight segments
and 2048 steps on the final arc path for the three-component scheme.
\Cref{fig:2pathvs3path} shows the
relative errors in $\mathrm{Re}\widehat{\omega}_2(iy,60)$
and $\mathrm{Im}\widehat{\omega}_2(iy,60)$, obtained via
comparison of the two methods. The plot shows that both the
two-component and three-component schemes are accurate on a
window $10^{-3} \leq y \leq 10^{-2}$ which contains
$y_\mathrm{cut} = 0.005$ (shown as a dotted line).
To the left of the dotted line, evaluation via the
three-component scheme serves as the denominator in relative
errors, and to the right evaluation via the two-component scheme.
On the overlap region $[10^{-3},10^{-2}]$ the agreement between
the two schemes is better than
$10^{-12}$, with computations performed in double precision.
We have also implemented these schemes in quadruple
precision with correspondingly better comparison. In
conclusion, with our two methods (and series evaluation at
the origin) accurate evaluation of $\widehat{\omega}_2(iy,60)$
is available for all $y\geq 0$. 
\begin{figure}[!htbp]
\centering
\includegraphics[width=10cm]{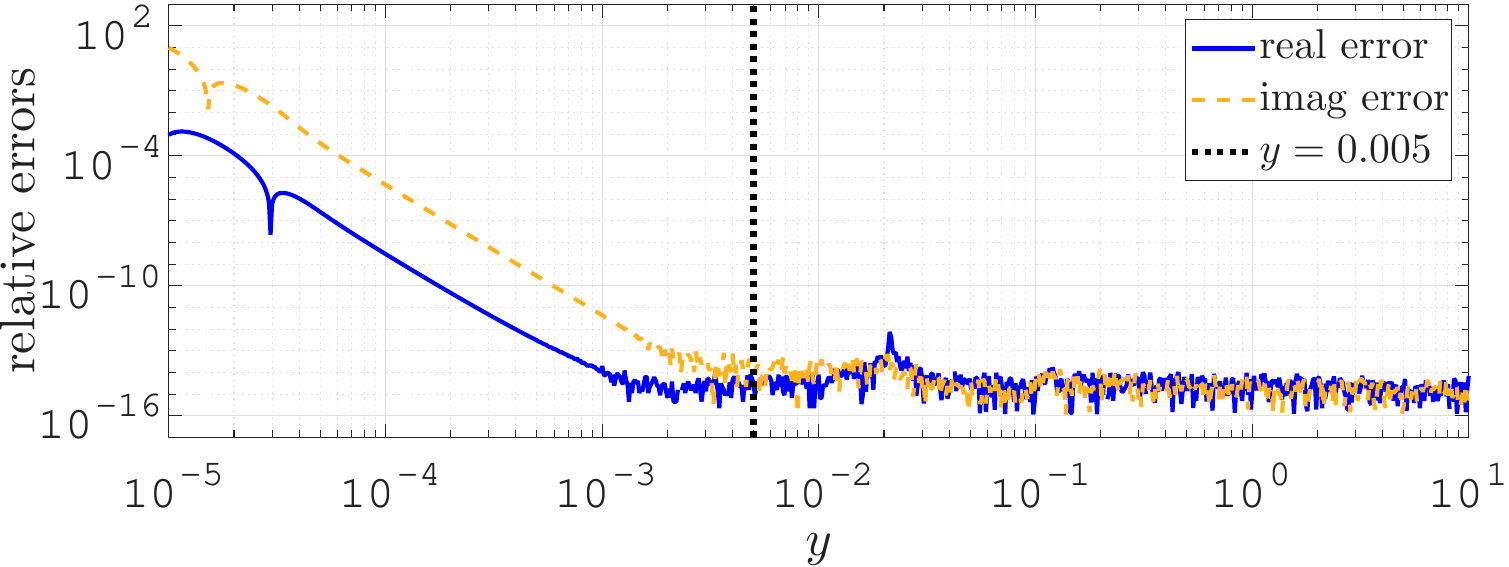}
\caption{
Relative difference in
$\widehat{\omega}_{2}(iy;60)$ computed via comparison of
the two-component and three-component path integration schemes.
The comparison indicates that both schemes are accurate on the
$[10^{-3},10^{-2}] \ni y_\mathrm{cut} = 0.005$
(dotted line).
}
\label{fig:2pathvs3path}
\end{figure}

\section{Error bounds}
\subsection{\texorpdfstring{$L^2$-norm error bounds for approximate kernels}{
L2-norm error bounds for approximate kernels}}\label{sub:l2_norm_bounds}
This section studies the error in our approximations for boundary conditions
and signal teleportation. Above we have adopted the dimensionless Laplace
frequency $\sigma = 2Ms$, where now $s$ is the physical Laplace frequency.
This meaning for $s$ clashes with its use as the spin in the Bardeen-Press
equation. In this section $s$ always means the Laplace variable. 

Suppose that we have an ``exact'' (causal) kernel $B(t)$ and an associated
convolution
\begin{align}
(B*\psimode)(t) = \int_0^t B(t-t')\psimode(t')dt'.
\end{align}
A function $f(t)$ is causal if $f(t)=f(t)\theta(t)$, where $\theta(t)$ is
the Heaviside step function; equivalently, $f(t)=0$ for $t<0$. Here, $B(t)$
may be a boundary, teleportation, or asymptotic evaluation kernel. In all
cases, a numerically constructed approximation $A(t)$ replaces the exact kernel.
We wish to bound the error incurred when $A(t)$ replaces $B(t)$.
Reference~\cite{Benedict:2012kw} derived error estimates for this replacement
over the infinite time window $[0,\infty)$. We now generalize these estimates
to the finite-time interval $[0,T]$. 

\begin{theorem}[$L_2$ convolution error bounds]\label{lem:finite_horizon_est}
For $T>0$ let $A(t)$ and $B(t)$ be causal kernels, and $\psimode(t)$ a
causal input. Assume the Laplace transforms $\widehat{A}(s)$ and
$\widehat{B}(s)$ exist for $s\in \mathrm{i}\mathbb{R}$, with well-defined
convolutions $(A*\psimode)(t)$ and $(B*\psimode)(t)$ for
$t\ge 0$.
\vskip 5pt

\noindent
(i) (Relative error) If $\widehat{B}(s)\neq 0$ for all
$s\in \mathrm{i}\mathbb{R}$, then
\begin{align}\label{eq:l2_estimate_rel_T}
\|A*\psimode-B*\psimode\|_{L_2(0,T)} \leq
\sup_{s\in\mathrm{i}\mathbb{R}} 
\frac{|\widehat{A}(s)-\widehat{B}(s)|}{|\widehat{B}(s)|}\;
\|B*\psimode\|_{L_2(0,T)}.
\end{align}
\vskip 5pt

\noindent
(ii) (Absolute error) Without the nonvanishing condition
on $\widehat{B}$, one has
\begin{equation}\label{eq:l2_estimate_abs_T}
\|A*\psimode-B*\psimode\|_{L_2(0,T)} \leq
(2\pi)^{-1/2} \sup_{s\in\mathrm{i}\mathbb{R}}
|\widehat{A}(s)-\widehat{B}(s)|\; \|\psimode\|_{L_2(0,T)}.
\end{equation}
\end{theorem}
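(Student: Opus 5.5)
The plan is to reduce both finite-horizon estimates to the infinite-horizon Plancherel estimates of \cite{Benedict:2012kw}, using \emph{causality} to truncate the data. For a causal kernel $K$, the value $(K*\psimode)(t)$ for $t\in[0,T]$ depends only on $\psimode|_{[0,T]}$. So with $\psimode_T=\psimode\,\theta(t)\theta(T-t)$ we have $(K*\psimode)(t)=(K*\psimode_T)(t)$ on $[0,T]$. Since $\psimode_T\in L_2(\mathbb{R})$ has compact support, its Laplace transform restricted to $\mathrm{i}\mathbb{R}$ is its Fourier transform, and Plancherel applies on the whole line. This sidesteps any growth of $\psimode$ beyond $T$.

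For (ii), I set $D=A-B$, which is causal with $\widehat{D}=\widehat{A}-\widehat{B}$ on $\mathrm{i}\mathbb{R}$. Then
\begin{align*}
\|A*\psimode-B*\psimode\|_{L_2(0,T)}=\|D*\psimode_T\|_{L_2(0,T)}\le\|D*\psimode_T\|_{L_2(\mathbb{R})}.
\end{align*}
By the convolution theorem, $\widehat{D*\psimode_T}=\widehat{D}\,\widehat{\psimode_T}$ on $s=\mathrm{i}y$. Plancherel then bounds the right-hand side by $\sup_{\mathrm{i}\mathbb{R}}|\widehat{D}|\,\|\psimode_T\|_{L_2(\mathbb{R})}$, and $\|\psimode_T\|_{L_2(\mathbb{R})}=\|\psimode\|_{L_2(0,T)}$. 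The constant $(2\pi)^{-1/2}$ comes from the Fourier normalization: I will fix the convention, as in \cite{Benedict:2012kw}, in which the convolution theorem carries the factor $\sqrt{2\pi}$, and track it through. One technical point is that $D*\psimode_T$ must lie in $L_2$ for Plancherel to apply. I will justify this either by a density argument, or by assuming $\widehat{D}$ is bounded on $\mathrm{i}\mathbb{R}$; otherwise the bound is vacuous anyway.

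For (i), the naive version of the same argument gives $\|B*\psimode_T\|_{L_2(0,\infty)}$ on the right-hand side, not the finite-horizon norm. To avoid this I will instead factor through $B*\psimode$. I define the causal error kernel $E$ by $\widehat{E}=(\widehat{A}-\widehat{B})/\widehat{B}$. Then $(A-B)*\psimode=E*(B*\psimode)$, at least at the level of Laplace transforms. I set $g=B*\psimode$, which is causal, and $g_T=g\,\theta(T-t)$. Causality of $E$ gives $(E*g)(t)=(E*g_T)(t)$ for $t\in[0,T]$. The same Plancherel step with $E$ in place of $D$ then yields
\begin{align*}
\|(A-B)*\psimode\|_{L_2(0,T)}\le\sup_{\mathrm{i}\mathbb{R}}|\widehat{E}|\,\|g_T\|_{L_2(\mathbb{R})}=\sup_{s\in\mathrm{i}\mathbb{R}}\frac{|\widehat{A}(s)-\widehat{B}(s)|}{|\widehat{B}(s)|}\,\|B*\psimode\|_{L_2(0,T)},
\end{align*}
where the multiplier bound from Plancherel carries no extra constant, matching the statement.

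The main obstacle is justifying that $E$ is a genuine causal kernel, meaning $\widehat{E}$ is the Laplace transform of a function supported on $t\ge0$. This requires $1/\widehat{B}$ to be analytic and suitably bounded in the right half-plane, a minimum-phase condition; nonvanishing on $\mathrm{i}\mathbb{R}$ alone does not obviously give it. I would first try to interpret the hypothesis as nonvanishing of $\widehat{B}$ on the closed right half-plane, which is consistent with the outgoing solution $W_\ell$ having no zeros there. Failing that, I would argue via the Paley–Wiener characterization of causal $L_2$ multipliers, or phrase the factorization directly on the truncated signals without inverting $\widehat{B}$.
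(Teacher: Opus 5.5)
You follow the paper's architecture: truncate the input using causality (the identity $P_T(K*f)=P_T(K*(P_Tf))$), reduce to an infinite-horizon multiplier bound, and for (i) factor $(A-B)*\Psi=E*(B*\Psi)$ and truncate $g=B*\Psi$. The only difference is that you run Plancherel yourself where the paper cites \cite{Benedict:2012kw}. Two steps, however, do not go through as you plan.

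\emph{The constant in (ii).} Tracking conventions cannot produce $(2\pi)^{-1/2}$. The paper's $\widehat{A},\widehat{B}$ are Laplace transforms $\int_0^\infty e^{-st}f(t)\,dt$ restricted to $s=\mathrm{i}y$. You can see this because products correspond to convolutions with no constant, and $\sum_k\gamma_k e^{\beta_k\tau}\leftrightarrow\sum_k\gamma_k/(\sigma-\beta_k)$. So Plancherel reads $\|f\|_{L_2}^2=(2\pi)^{-1}\|\widehat f\|_{L_2(\mathrm{i}\mathbb{R})}^2$, and your computation yields $\|D*\Psi_T\|_{L_2}\le\sup_{\mathrm{i}\mathbb{R}}|\widehat D|\,\|\Psi\|_{L_2(0,T)}$ with constant $1$. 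That is the same constant you correctly get in (i) from the identical multiplier step. In the unitary normalization, the convolution-theorem factor $\sqrt{2\pi}$ multiplies $\sup|\mathcal{F}D|=(2\pi)^{-1/2}\sup|\widehat D|$, which gives the same bound.

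The constant $1$ is sharp, so (ii) as stated is false. Take $A=e^{-t}\theta(t)$, $B=0$, $\Psi=\theta$. Then $\sup_{\mathrm{i}\mathbb{R}}|\widehat A-\widehat B|=1$, while $\|A*\Psi\|_{L_2(0,T)}^2=T-2(1-e^{-T})+\tfrac12(1-e^{-2T})\approx 8.5$ at $T=10$. The square of the right side is only $10/(2\pi)\approx 1.6$. The factor $(2\pi)^{-1/2}$ belongs with the frequency-side norm $\|\widehat\Psi\|_{L_2(\mathrm{i}\mathbb{R})}=\sqrt{2\pi}\,\|\Psi\|_{L_2}$. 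The paper's proof takes it from Eq.~(A3) of \cite{Benedict:2012kw} without re-deriving it. What your argument actually proves, and what you should state, is (ii) with constant $1$.

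\emph{Causality of $E$ in (i).} You have found a real gap here, and the paper's proof also passes over it. It asserts $(A-B)*\Psi=E*(B*\Psi)$ ``via the convolution theorem'' and then applies the truncation identity, which is valid only for causal kernels. This is not a technicality, and zero-freeness of $\widehat B$ on the closed right half-plane does not close it.

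Take $a,\epsilon>0$, $B(t)=e^{-(t-a)}\theta(t-a)$, and $A=B+\epsilon e^{-t}\theta(t)$. Then $\widehat B(s)=e^{-as}/(1+s)$ has no zeros in $\mathrm{Re}\,s\ge 0$. Also $\widehat E(s)=\epsilon e^{as}$ has modulus $\epsilon$ on $\mathrm{i}\mathbb{R}$, yet it is a pure time advance. With $\Psi=\theta$ we get $((A-B)*\Psi)(t)=\epsilon(1-e^{-t})$ and $(B*\Psi)(t)=(1-e^{-(t-a)})\theta(t-a)$. So the left side of (i) is $\epsilon\|1-e^{-t}\|_{L_2(0,T)}$, while the right side is $\epsilon\|1-e^{-t}\|_{L_2(0,T-a)}$ (zero if $T\le a$). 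Hence (i) fails for every $T>0$.

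Consequently, no reformulation that avoids inverting $\widehat B$ can succeed; a hypothesis must be added. The right one is that $\widehat E=(\widehat A-\widehat B)/\widehat B$ extends to a bounded analytic function on the open right half-plane. By Paley--Wiener, convolution with $E$ then maps functions supported in $[\tau,\infty)$ to functions supported in $[\tau,\infty)$. This gives $(E*g)(t)=(E*g_T)(t)$ on $[0,T]$ with $L_2$ operator norm $\sup_{\mathrm{i}\mathbb{R}}|\widehat E|$, and your argument then yields (i) exactly as written.
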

\begin{proof}
Define the restriction operator $P_T$ by
\begin{align}
(P_T f)(t) = f(t) \mathbf{1}_{[0,T]}(t),
\end{align}
where $\mathbf{1}_{[0,T]}(t)$ is the indicator function.
For a causal kernel $K$ and causal input $f$,
\begin{align}\label{eq:PT_identity}
P_T(K*f) = P_T\bigl(K*(P_T f)\bigr),
\end{align}
since $(K*f)(t)=\int_0^t K(t-t')f(t')dt'$
for $0\le t\le T$ only samples $f(t)$ on $[0,t]\subseteq [0,T]$.

Let $D(t)=A(t)-B(t)$. With $L_2$ monotonicity under restriction
and \cref{eq:PT_identity},
\begin{align}
\|D*\psimode\|_{L_2(0,T)}
= \|P_T(D*(P_T\psimode))\|_{L_2(0,\infty)}
\le \|D*(P_T\psimode)\|_{L_2(0,\infty)}.
\end{align}
Now apply the $L_2(0,\infty)$ absolute estimate from
Eq.~(A3) of Ref.~\cite{Benedict:2012kw} with input $P_T\psimode$:
\begin{align}
\|D*(P_T\psimode)\|_{L_2(0,\infty)}
\le (2\pi)^{-1/2}\Big(\sup_{s\in \mathrm{i}\mathbb{R}}|\widehat{D}(s)|
\Big)\;\|P_T\psimode\|_{L_2(0,\infty)}.
\end{align}
Since $\widehat{D}(s)=\widehat{A}(s)-\widehat{B}(s)$ and
$\|P_T\psimode\|_{L_2(0,\infty)}=\|\psimode\|_{L_2(0,T)}$,
this result is \eqref{eq:l2_estimate_abs_T}.

Now assume $\widehat{B}(s)\neq 0$ for all $s\in \mathrm{i}\mathbb{R}$
and define 
\begin{align}
\widehat{E}(s) = \frac{\widehat{A}(s)-\widehat{B}(s)}{\widehat{B}(s)},
\qquad s\in \mathrm{i}\mathbb{R}.
\end{align}
Starting with the frequency-domain identity
$(\widehat{A}-\widehat{B})\widehat{\psimode}=\widehat{E}
(\widehat{B}\widehat{\psimode})$,
one obtains
\begin{align}
(A-B)*\psimode = E*(B*\psimode),
\end{align}
via the convolution theorem.
Set $f=B*\Psi$, and use \cref{eq:PT_identity} along with the
same restriction argument as above, thereby finding
\begin{align}
\begin{split}
\|(A-B)*\Psi\|_{L_2(0,T)} & = \|P_T(E*f)\|_{L_2(0,\infty)} \\
& = \|P_T(E*(P_T f))\|_{L_2(0,\infty)} \\
& \le \|E*(P_T f)\|_{L_2(0,\infty)}.
\end{split}
\end{align}
Finally, apply the $L_2(0,\infty)$ relative estimate from
\cite{Benedict:2012kw} get the bound
\begin{align}
\|E*(P_T f)\|_{L_2(0,\infty)} & \le
\Big(\sup_{s\in \mathrm{i}\mathbb{R}}
|\widehat{E}(s)|\Big)\|P_T f\|_{L_2(0,\infty)}
= \Big(\sup_{s\in \mathrm{i}\mathbb{R}}|\widehat{E}(s)|\Big)
\|f\|_{L_2(0,T)}.
\end{align}
Recalling that $f=B*\Psi$ and
$\widehat{E}(s)=(\widehat{A}(s)-\widehat{B}(s))/\widehat{B}(s)$,
we establish \cref{eq:l2_estimate_rel_T}.
\end{proof}

\subsection{Max-norm error bounds for approximate kernels}
\label{sec:max_norm_error}
For causal input $\psimode(t)$ the convolution error at time $t\ge 0$ is
$e(t) = \int_0^t D(t')\psimode(t-t')dt'$,
where, for a reason given below, we have changed variables to have
$\psimode$ with the shifted argument. Consider a final time $T>0$
and choose a window size $\Delta \in (0,T]$. For a fixed $t\in[0,T]$,
we split the integral into recent-history and remote-history
portions,
\begin{align}
\label{eq:split_error_norm}
e(t) = \underbrace{\int_0^{\min(t,\Delta)} D(t')
\psimode(t-t')dt'}_{\text{recent-history}}
     + \underbrace{\int_{\Delta}^{\max(t,\Delta)} D(t')
\psimode(t-t')dt'}_{\text{remote-history}}.
\end{align}
Notice that only the recent-history portion contributes to $e(t)$ for
$t\leq \Delta$. We interpret $\Delta$ as an ``effective memory''
time for the convolution. Choosing a $\Delta > 0$, corresponds to
our intuition that if the kernel is decaying
with time, then the error at late times should depend more strongly
on the signal at late times. 
\begin{theorem}[Pointwise-maximum error bound]
\label{theorem:max_error}
Let $T>0$, $\Delta\in(0,T]$, with $t\geq \Delta$.
For $\psimode\in L_\infty(0,T)$ and
any causal $D\in L_1(0,T)$, the pointwise error obeys
\begin{equation}
\label{eq:max_error}
|e(t)|
\le
\|D\|_{L_1(0,\Delta)}\,\|\psimode\|_{L_\infty(t-\Delta,t)}
+
\|D\|_{L_1(\Delta,t)}\,\|\psimode\|_{L_\infty(0,t-\Delta)}.
\end{equation}
\end{theorem}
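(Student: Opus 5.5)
The bound follows from the split \cref{eq:split_error_norm} together with the elementary Hölder-type inequality $\left|\int_I D(t')g(t')\,dt'\right| \le \|D\|_{L_1(I)}\|g\|_{L_\infty(I)}$, applied separately to each piece. Since $t\ge\Delta$, we have $\min(t,\Delta)=\Delta$ and $\max(t,\Delta)=t$. The split is therefore exactly
\[
e(t)=\int_0^\Delta D(t')\psimode(t-t')\,dt' + \int_\Delta^t D(t')\psimode(t-t')\,dt',
\]
by additivity of the integral over $[0,t]=[0,\Delta]\cup[\Delta,t]$. The triangle inequality then gives $|e(t)|$ as at most the sum of the absolute values of the two pieces.

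For the recent-history piece, I bound $|\psimode(t-t')|$ for $t'\in[0,\Delta]$. As $t'$ ranges over $[0,\Delta]$, the argument $t-t'$ ranges over $[t-\Delta,t]\subseteq[0,T]$, so $|\psimode(t-t')|\le\|\psimode\|_{L_\infty(t-\Delta,t)}$ for almost every $t'$. Pulling this out of the integral leaves $\int_0^\Delta|D(t')|\,dt'=\|D\|_{L_1(0,\Delta)}$.

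For the remote-history piece, $t'\in[\Delta,t]$ gives $t-t'\in[0,t-\Delta]$, so $|\psimode(t-t')|\le\|\psimode\|_{L_\infty(0,t-\Delta)}$. The remaining integral is $\|D\|_{L_1(\Delta,t)}$. Adding the two bounds yields \cref{eq:max_error}. This explains the change of variables made before the theorem: it places the kernel in the integration variable, so the $L_1$ norms of $D$ appear over the fixed windows $[0,\Delta]$ and $[\Delta,t]$.

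There is no real obstacle here. The only care needed is measure-theoretic. The $L_\infty$ bounds hold almost everywhere, which suffices inside an integral, and the reflection $t'\mapsto t-t'$ maps null sets to null sets. Also, $D\in L_1(0,T)$ together with $t\le T$ guarantees that both $L_1$ norms are finite and that the integrals defining $e(t)$ converge absolutely, so the splitting is legitimate. Causality of $D$ and $\psimode$ is only used to write the convolution over $[0,t]$.
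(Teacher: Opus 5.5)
Your proposal is correct and follows essentially the same route as the paper: split the integral at $\Delta$, use that $t-t'$ lies in $[t-\Delta,t]$ or $[0,t-\Delta]$ on the respective pieces, and apply the $L_1$--$L_\infty$ Hölder estimate to each piece. Your remarks on almost-everywhere bounds and absolute convergence fill in the ``standard estimates'' that the paper leaves implicit.
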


\begin{proof}
We begin from \cref{eq:split_error_norm}. For the first term,
note that $t-t'\in[t-\Delta,t]$ for $t'\in[0,\Delta]$. For the
second, note that $t-t'\in[0,t-\Delta]$ for $t'\in[\Delta,t]$.
These observations and standard estimates complete the proof.
\end{proof}

\begin{remark}\label{remark2}
The first term on the right-hand side of \eqref{eq:max_error} depends only on
the size of $\psimode$ in the window $[t-\Delta,t]$, while the second
term couples the early-time amplitude of $\psimode$ to late-time size 
$\|D\|_{L_1(\Delta,t)}$ of the kernel error. If $D(t')$ decays in $t'$,
$\Delta$ can be chosen so that $\|D\|_{L_1(\Delta,t)}$ is sufficiently
small for all $t$ of interest. For a decaying kernel error
early-time amplitudes are suppressed by the smallness of $\|D\|_{L_1(\Delta,t)}$.
\end{remark}

As a practical matter, we note that while the maximum pointwise (temporal)
error is controlled by the $L_1$-norm of the temporal kernel error
$D(t)=A(t)-B(t)$, the sum-of-poles approximation directly controls the
kernel's relative error on the inversion contour~\cref{eq:SumofPoles_error}.
Nevertheless, $\|D\|_{L^1(0,T)}$ can be computed (or bounded) a posteriori,
and it generally decreases as the approximation
\cref{eq:SumofPoles_error} is tightened.

\subsection{Application of error bounds}



Our kernel computation introduces two sources of error in the
frequency-domain radiation, teleportation, and AWE kernels: numerical
evaluation of the exact Laplace-domain profiles along the inversion
contour, typically $\sigma=iy$, and compression of the resulting data
into a sum-of-poles form for efficient time-domain use. The bounds
developed above apply to both steps, although in practice we mainly use
them to quantify the compression error, since the first step can be
driven close to machine precision by accurate ODE integration along
numerically stable paths together with sufficiently refined $y$-sampling
(cf.~\cref{app:KernelEvaluation}). Concretely, once a rational
approximation $\widehat{\xi}_\ell(iy) \simeq
\widehat{\omega}_\ell(iy,\rho_b)$ for, say, an FDRK has been obtained,
we evaluate the pointwise difference
$D(Y_\alpha) =
\widehat{\xi}_\ell(iY_\alpha)-\widehat{\omega}_\ell(iY_\alpha,\rho_b)$
on a dense auxiliary grid
$\{Y_\alpha\}_{\alpha=0}^{N_Y}$, with $N_Y$ much larger than for the
compression grid$\{y_j\}_{j=0}^{2QN_{\mathrm{int}}}$; see
\cref{subsec:CompressionAlg}. The quantities $D$ and $E$ entering the
bounds are then obtained directly from these sampled differences.

\section{Tables}
\label{app:numtables}
This appendix collects the kernels used for the experiments
in \cref{sec:NumExp}. \Cref{tab:ROBC30table} determines the
approximation $\xi_{2}(t,30)$, and \Cref{tab:ROBC60table}
the approximation $\xi_{2}(t,60)$. These tables have been
computed in quad precision and achieve the tolerance
$\varepsilon = 10^{-13}$. \Cref{tab:TLP30to60table} determines
$\xi^T_{2}(t,30,60)$. This table has been computed in quad precision
and achieves the tolerance $\varepsilon = 10^{-13}$. Finally,
\Cref{tab:TLP60to1e15table} determines $\xi^T_{2}(t,60,10^{15})$.
This table has been computed in double precision, and we believe the
approximation achieves the tolerance $\varepsilon = 10^{-12}$.
\begin{table}[!h]
\centering
\caption{ROBC table for $s = -2$, $\ell = 2$, $\rho = 30$, $\varepsilon = 10^{-13}$
\label{tab:ROBC30table}
}
\hrule
\vskip 5pt
\input{tables/FdCqRho0030L002D18_bp2.table}
\vskip 5pt
\hrule
\end{table}

\begin{table}[!h]
\centering
\caption{ROBC table for $s = -2$, $\ell = 2$, $\rho = 60$, $\varepsilon = 10^{-13}$
\label{tab:ROBC60table}
}
\hrule
\vskip 5pt
\input{tables/FdCqRho0060L002D18_bp2.table}
\vskip 5pt
\hrule
\end{table}

\begin{table}[!h]
\centering
\caption{Teleportation table for $s = -2$, $\ell = 2$, $\rho_1 = 30$, $\rho_2 = 60$, $\varepsilon = 10^{-13}$
\label{tab:TLP30to60table}
}
\hrule
\vskip 5pt
\input{tables/FdCqRho0030to0060L002D20_bp2.table}
\vskip 5pt
\hrule
\end{table}
\begin{table}[!h]
\centering
\caption{Teleportation table for $s = -2$, $\ell = 2$, $\rho_1 = 60$,
                             $\rho_2 = 10^{15}$, $\varepsilon = 10^{-12}$
\label{tab:TLP60to1e15table}
}
\hrule
\vskip 5pt
\input{tables/FdCqRho0060to1E15L002D15_bp2.table}
\vskip 5pt
\hrule
\end{table}

\bibliographystyle{siamadvanced} 

\begin{thebibliography}{10}

\bibitem{SpECwebsite}
{\em The {S}pectral {E}instein {C}ode}.
\newblock \url{http://www.black-holes.org/SpEC.html}.

\bibitem{Berti:2025hly}
{\sc J.~Abedi et~al.}, {\em {Black hole spectroscopy: from theory to
  experiment}},  (2025), \url{https://arxiv.org/abs/2505.23895}.

\bibitem{AbramowitzStegun}
{\sc M.~Abramowitz and I.~A. Stegun}, {\em Handbook of Mathematical
  Functions with Formulas, Graphs, and Mathematical Tables}, 
  Applied Mathematics Series, vol.~55, 10th printing,
  December 1972, with corrections.

\bibitem{LISAConsortiumWaveformWorkingGroup:2023arg}
{\sc N.~Afshordi et~al.}, {\em {Waveform modelling for the Laser Interferometer
  Space Antenna}}, Living Rev. Rel., 28 (2025), 9,
  \url{https://arxiv.org/abs/2311.01300}.

\bibitem{alpert2000rapid}
{\sc B.~Alpert, L.~Greengard, and T.~Hagstrom}, {\em Rapid evaluation of
  nonreflecting boundary kernels for time-domain wave propagation}, SIAM
  Journal on Numerical Analysis, 37 (2000), pp.~1138--1164.

\bibitem{alpert2002nonreflecting}
{\sc B.~Alpert, L.~Greengard, and T.~Hagstrom}, {\em Nonreflecting boundary
  conditions for the time-dependent wave equation},
  J. Comput. Phys., 180 (2002), pp.~270--296.

\bibitem{Ansorg:2016ztf}
{\sc M.~Ansorg and R.~Panosso~Macedo}, {\em {Spectral decomposition of
  black-hole perturbations on hyperboloidal slices}}, Phys. Rev. D, 93 (2016),
  124016, \url{https://arxiv.org/abs/1604.02261}.

\bibitem{APPELO20094200}
{\sc D.~Appel\"{o} and T.~Colonius}, {\em A high-order super-grid-scale
  absorbing layer and its application to linear hyperbolic systems}, 
  J. Comput. Phys., 228 (2009), pp.~4200--4217.

\bibitem{Aretakis:2023ast}
{\sc S.~Aretakis, G.~Khanna, and S.~Sabharwal}, {\em {An observational
  signature for extremal black holes}}, Gen. Rel. Grav., 57 (2025), 160,
  \url{https://arxiv.org/abs/2307.03963}.

\bibitem{Barack:2017oir}
{\sc L.~Barack and P.~Giudice}, {\em {Time-domain metric reconstruction for
  self-force applications}}, Phys. Rev. D, 95 (2017), 104033,
  \url{https://arxiv.org/abs/1702.04204}.

\bibitem{Bardeen:1973xb}
{\sc J.~M. Bardeen and W.~H. Press}, {\em {Radiation fields in the
  Schwarzschild background}}, J. Math. Phys., 14 (1973), pp.~7--19.

\bibitem{https://doi.org/10.1002/cpa.3160330603}
{\sc A.~Bayliss and E.~Turkel}, {\em Radiation boundary conditions for
  wave-like equations}, Commun. Pure Appl. Math., 33
  (1980), pp.~707--725,
  \url{https://arxiv.org/abs/https://onlinelibrary.wiley.com/doi/pdf/10.1002/cpa.3160330603}.

\bibitem{Benedict:2012kw}
{\sc A.~G. Benedict, S.~E. Field, and S.~R. Lau}, {\em {Fast evaluation of
  asymptotic waveforms from gravitational perturbations}}, Class. Quant. Grav.,
  30 (2013), 055015, \url{https://arxiv.org/abs/1210.1565}.

\bibitem{Berti:2009kk}
{\sc E.~Berti, V.~Cardoso, and A.~O. Starinets}, {\em {Quasinormal modes of
  black holes and black branes}}, Class. Quant. Grav., 26 (2009), 163001,
  \url{https://arxiv.org/abs/0905.2975}.

\bibitem{Bishoyi:2024lqm}
{\sc S.~D. Bishoyi, S.~Sabharwal, and G.~Khanna}, {\em {Numerical Evidence for
  Non-Axisymmetric Gravitational {\textquotedblleft}Hair{\textquotedblright}
  for Extremal Kerr Black Hole Spacetimes with Hyperboloidal Foliations}}, Gen.
  Rel. Grav., 57 (2025), 46, \url{https://arxiv.org/abs/2407.06926}.

\bibitem{BIZZOZERO2017118}
{\sc D.~Bizzozero, J.~Ellison, K.~Heinemann, and S.~Lau}, {\em Rapid evaluation
  of two-dimensional retarded time integrals}, J. Comput. Appl. Math., 324
  (2017), pp.~118--141.

\bibitem{Brito:2015oca}
{\sc R.~Brito, V.~Cardoso, and P.~Pani}, {\em {Superradiance}: {New Frontiers
  in Black Hole Physics}}, Lect. Notes Phys., 906 (2015), pp.~1--237,
  \url{https://arxiv.org/abs/1501.06570}.

\bibitem{Buchman_2024}
{\sc L.~T. Buchman, M.~D. Duez, M.~Morales, M.~A. Scheel, T.~M. Kostersitz,
  A.~M. Evans, and K.~Mitman}, {\em Numerical relativity multimodal waveforms
  using absorbing boundary conditions}, Class. Quant. Grav, 41
  (2024), 175011.

\bibitem{Buchman_2006}
{\sc L.~T. Buchman and O.~C.~A. Sarbach}, {\em Towards absorbing outer
  boundaries in general relativity}, Class. Quant. Grav, 23 (2006),
  pp.~6709--6744.

\bibitem{Buchman_2007}
{\sc L.~T. Buchman and O.~C.~A. Sarbach}, {\em Improved outer boundary
  conditions for einstein’s field equations}, Class. Quant. Grav,
  24 (2007), pp.~S307--S326.

\bibitem{Burko:2020wzq}
{\sc L.~M. Burko, G.~Khanna, and S.~Sabharwal}, {\em {Scalar and gravitational
  hair for extreme Kerr black holes}}, Phys. Rev. D, 103 (2021), L021502,
  \url{https://arxiv.org/abs/2005.07294}.

\bibitem{Cardoso:2021wlq}
{\sc V.~Cardoso, K.~Destounis, F.~Duque, R.~P. Macedo, and A.~Maselli}, {\em
  {Black holes in galaxies: Environmental impact on gravitational-wave
  generation and propagation}}, Phys. Rev. D, 105 (2022), L061501,
  \url{https://arxiv.org/abs/2109.00005}.

\bibitem{10.1098/rspa.1975.0066}
{\sc S.~Chandrasekhar}, {\em On the equations governing the perturbations of
  the Schwarzschild black hole}, Proc. Roy. Soc. Lond. A, 343
  (1975), pp.~289--298,
  \url{https://arxiv.org/abs/https://royalsocietypublishing.org/rspa/article-pdf/343/1634/289/61331/rspa.1975.0066.pdf}.

\bibitem{Chandrasekhar:BHs}
{\sc S.~Chandrasekhar}, {\em The Mathematical Theory of Black Holes}, Clarendon
  Press, 1998.

\bibitem{PhysRevD.100.104025}
{\sc K.~Csuk\'as, I.~R\'acz, and G.~Z. T\'oth}, {\em Numerical investigation of
  the dynamics of linear spin $s$ fields on a Kerr background: Late-time tails
  of spin $s=\pm 1, \pm 2$
  fields}, Phys. Rev. D, 100 (2019), 104025.

\bibitem{Dolan:2012yt}
{\sc S.~R. Dolan}, {\em {Superradiant instabilities of rotating black holes in
  the time domain}}, Phys. Rev. D, 87 (2013), 124026,
  \url{https://arxiv.org/abs/1212.1477}.

\bibitem{donninger2011proof}
{\sc R.~Donninger, W.~Schlag, and A.~Soffer}, {\em A proof of Price's law on
  Schwarzschild black hole manifolds for all angular momenta}, Adv. Math.,
  226 (2011), pp.~484--540.

\bibitem{erdelyi1956asymptotic}
{\sc A.~Erd{\'e}lyi}, {\em Asymptotic Expansions}, Dover Books on Mathematics,
  Dover Publications, New York, 1956.
\newblock Unabridged and unaltered republication of Technical report 3,
  prepared under contract Nonr-220(11) for the Office of Naval Research, 1955.

\bibitem{Field:2009kk}
{\sc S.~E. Field, J.~S. Hesthaven, and S.~R. Lau}, {\em {Discontinuous Galerkin
  method for computing gravitational waveforms from extreme mass ratio
  binaries}}, Class. Quant. Grav., 26 (2009), 165010,
  \url{https://arxiv.org/abs/0902.1287}.

\bibitem{Field:2014cka}
{\sc S.~E. Field and S.~R. Lau}, {\em {Fast evaluation of far-field signals for
  time-domain wave propagation}}, J. Sci. Comput., 64 (2015), pp.~647--669,
  \url{https://arxiv.org/abs/1409.5893}.

\bibitem{Gelles:2025gxi}
{\sc Z.~Gelles and F.~Pretorius}, {\em {Accumulation of charge on an extremal
  black hole{\textquoteright}s event horizon}}, Phys. Rev. D, 112 (2025),
  064003, \url{https://arxiv.org/abs/2503.04881}.

\bibitem{Greengard_2014}
{\sc L.~Greengard, T.~Hagstrom, and S.~Jiang}, {\em The solution of the scalar
  wave equation in the exterior of a sphere}, J. Comput. Phys.,
  274 (2014), pp.~191--207.

\bibitem{HAGSTROM1998403}
{\sc T.~Hagstrom and S.~Hariharan}, {\em A formulation of asymptotic and exact
  boundary conditions using local operators}, Appl. Numer. Math., 27
  (1998), pp.~403--416.
\newblock Special Issue on Absorbing Boundary Conditions.

\bibitem{PhysRevD.62.044029}
{\sc S.~A. Hughes}, {\em Computing radiation from Kerr black holes:
  Generalization of the Sasaki-Nakamura equation}, Phys. Rev. D, 62 (2000),
  044029.

\bibitem{Hughes:2019zmt}
{\sc S.~A. Hughes, A.~Apte, G.~Khanna, and H.~Lim}, {\em {Learning about black
  hole binaries from their ringdown spectra}}, Phys. Rev. Lett., 123 (2019),
  161101, \url{https://arxiv.org/abs/1901.05900}.

\bibitem{Islam:2024vro}
{\sc T.~Islam, G.~Faggioli, G.~Khanna, S.~E. Field, M.~van~de Meent, and
  A.~Buonanno}, {\em {Phenomenology and origin of late-time tails in eccentric
  binary black hole mergers}}, Phys. Rev. D, 112 (2025), 024061,
  \url{https://arxiv.org/abs/2407.04682}.

\bibitem{Lau:2004as}
{\sc S.~R. Lau}, {\em {Rapid evaluation of radiation boundary kernels for
  time-domain wave propagation on black holes: Implementation and numerical
  tests}}, Class. Quant. Grav., 21 (2004), pp.~4147--4192.

\bibitem{Lau:2004jn}
{\sc S.~R. Lau}, {\em {Rapid evaluation of radiation boundary kernels for time
  domain wave propagation on blackholes: theory and numerical methods}},
  J. Comput. Phys., 199 (2004), pp.~376--422,
  \url{https://arxiv.org/abs/gr-qc/0401001}.

\bibitem{Lau:2005ti}
{\sc S.~R. Lau}, {\em {Analytic structure of radiation boundary kernels for
  blackhole perturbations}}, J. Math. Phys., 46 (2005), 102503,
  \url{https://arxiv.org/abs/gr-qc/0507140}.

\bibitem{Leaver:1985ax}
{\sc E.~W. Leaver}, {\em {An Analytic representation for the quasi-normal modes
  of Kerr black holes}}, Proc. Roy. Soc. Lond. A, 402 (1985), pp.~285--298.

\bibitem{LiTafloveBackman2005}
{\sc X.~Li, A.~Taflove, and V.~Backman}, {\em Modified fdtd near-to-far-field
  transformation for improved backscattering calculation of strongly
  forward-scattering objects}, IEEE Antennas Wirel. Propag. Lett., 4
  (2005), pp.~35--38.

\bibitem{Lim:2022veo}
{\sc H.~Lim, S.~A. Hughes, and G.~Khanna}, {\em {Measuring quasinormal mode
  amplitudes with misaligned binary black hole ringdowns}}, Phys. Rev. D, 105
  (2022), 124030, \url{https://arxiv.org/abs/2204.06007}.

\bibitem{Long:2021ufh}
{\sc O.~Long and L.~Barack}, {\em {Time-domain metric reconstruction for
  hyperbolic scattering}}, Phys. Rev. D, 104 (2021), 024014,
  \url{https://arxiv.org/abs/2105.05630}.

\bibitem{Ma:2023qjn}
{\sc S.~Ma et~al.}, {\em {Fully relativistic three-dimensional
  Cauchy-characteristic matching for physical degrees of freedom}}, Phys. Rev.
  D, 109 (2024), 124027, \url{https://arxiv.org/abs/2308.10361}.

\bibitem{Ma:2024hzq}
{\sc S.~Ma, M.~A. Scheel, J.~Moxon, K.~C. Nelli, N.~Deppe, L.~E. Kidder,
  W.~Throwe, and N.~L. Vu}, {\em {Merging black holes with
  Cauchy-characteristic matching: Computation of late-time tails}}, Phys. Rev.
  D, 112 (2025), 024003, \url{https://arxiv.org/abs/2412.06906}.

\bibitem{Muller2011}
{\sc J.~Muller, G.~Parent, G.~Jeandel, and D.~Lacroix}, {\em Finite-difference
  time-domain and near-field-to-far-field transformation in the spectral
  domain}, J. Opt. Soc. Am. A, 28 (2011), pp.~868--878.

\bibitem{PanossoMacedo:2019npm}
{\sc R.~Panosso~Macedo}, {\em {Hyperboloidal framework for the Kerr
  spacetime}}, Class. Quant. Grav., 37 (2020), 065019,
  \url{https://arxiv.org/abs/1910.13452}.

\bibitem{Price:1972pw}
{\sc R.~H. Price}, {\em {Nonspherical Perturbations of Relativistic
  Gravitational Collapse. II. Integer-Spin, Zero-Rest-Mass Fields}}, Phys. Rev.
  D, 5 (1972), pp.~2439--2454.

\bibitem{Rink:2024swg}
{\sc K.~Rink, R.~Bachhar, T.~Islam, N.~E.~M. Rifat, K.~Gonzalez-Quesada, S.~E.
  Field, G.~Khanna, S.~A. Hughes, and V.~Varma}, {\em {Gravitational wave
  surrogate model for spinning, intermediate mass ratio binaries based on
  perturbation theory and numerical relativity}}, Phys. Rev. D, 110 (2024),
  124069, \url{https://arxiv.org/abs/2407.18319}.

\bibitem{Rinne:2008vn}
{\sc O.~Rinne, L.~T. Buchman, M.~A. Scheel, and H.~P. Pfeiffer}, {\em
  {Implementation of higher-order absorbing boundary conditions for the
  Einstein equations}}, Class. Quant. Grav., 26 (2009), 075009,
  \url{https://arxiv.org/abs/0811.3593}.

\bibitem{slavyanov2000special}
{\sc S.~Y. Slavyanov and W.~Lay}, {\em Special Functions: A Unified Theory
  Based on Singularities}, Oxford Mathematical Monographs, Oxford University
  Press, Oxford, 2000.

\bibitem{Stein:2012ffl}
{\sc L.~C. Stein}, {\em {Probes of strong-field gravity}}, PhD thesis, MIT,
  2012.

\bibitem{Sundararajan:2007jg}
{\sc P.~A. Sundararajan, G.~Khanna, and S.~A. Hughes}, {\em {Towards adiabatic
  waveforms for inspiral into Kerr black holes. I. A New model of the source
  for the time domain perturbation equation}}, Phys. Rev. D, 76 (2007),
  104005, \url{https://arxiv.org/abs/gr-qc/0703028}.

\bibitem{TafloveHagness2005}
{\sc A.~Taflove and S.~C. Hagness}, {\em Computational Electrodynamics: The
  Finite-Difference Time-Domain Method}, Artech House, Norwood, MA, 3rd ed.,
  2005.

\bibitem{Teukolsky:1973ApJ}
{\sc S.~A. {Teukolsky}}, {\em {Perturbations of a Rotating Black Hole. I.
  Fundamental Equations for Gravitational, Electromagnetic, and Neutrino-Field
  Perturbations}}, Astrophys. J., 185 (1973), pp.~635--648.

\bibitem{Teukolsky:1974yv}
{\sc S.~A. Teukolsky and W.~H. Press}, {\em {Perturbations of a rotating black
  hole. III. Interaction of the hole with gravitational and electromagnetic
  radiation}}, Astrophys. J., 193 (1974), pp.~443--461.

\bibitem{Thorne:1980ru}
{\sc K.~S. Thorne}, {\em {Multipole Expansions of Gravitational Radiation}},
  Rev. Mod. Phys., 52 (1980), pp.~299--339.

\bibitem{Tokita1972}
{\sc T.~Tokita}, {\em Exponential decay of solutions for the wave equation in
  the exterior domain with spherical boundary}, Math. Kyoto Univ, 12
  (1972), pp.~413--430.

\bibitem{Vishal:2023fye}
{\sc M.~Vishal, S.~E. Field, K.~Rink, S.~Gottlieb, and G.~Khanna}, {\em {Toward
  exponentially-convergent simulations of extreme-mass-ratio inspirals: A
  time-domain solver for the scalar Teukolsky equation with singular source
  terms}}, Phys. Rev. D, 110 (2024), 104009,
  \url{https://arxiv.org/abs/2307.01349}.

\bibitem{Wardell:2021fyy}
{\sc B.~Wardell, A.~Pound, N.~Warburton, J.~Miller, L.~Durkan, and A.~Le~Tiec},
  {\em {Gravitational Waveforms for Compact Binaries from Second-Order
  Self-Force Theory}}, Phys. Rev. Lett., 130 (2023), 241402,
  \url{https://arxiv.org/abs/2112.12265}.

\bibitem{wasow1987asymptotic}
{\sc W.~Wasow}, {\em Asymptotic Expansions for Ordinary Differential
  Equations}, Dover Books on Mathematics, Dover Publications, Inc., New York,
  1987.
\newblock Reprint of the 1965 edition.

\bibitem{wilcox1959initial}
{\sc C.~H. Wilcox}, {\em The initial-boundary value problem for the wave
  equation in an exterior domain with spherical boundary},
  Notices of the AMS, 6 (1959), pp.~869--870.
  \newblock Abstract No. 564--20.

\bibitem{xu2013bootstrap}
{\sc K.~Xu and S.~Jiang}, {\em A bootstrap method for sum-of-poles
  approximations}, J. Sci. Comput., 55 (2013), pp.~16--39.

\bibitem{Zenginoglu:2010cq}
{\sc A.~Zenginoglu}, {\em {Hyperboloidal layers for hyperbolic equations on
  unbounded domains}}, J. Comput. Phys., 230 (2011), pp.~2286--2302,
  \url{https://arxiv.org/abs/1008.3809}.

\bibitem{Zenginoglu:2011zz}
{\sc A.~Zenginoglu and G.~Khanna}, {\em {Null infinity waveforms from
  extreme-mass-ratio inspirals in Kerr spacetime}}, Phys. Rev. X, 1 (2011),
  021017, \url{https://arxiv.org/abs/1108.1816}.

\bibitem{Zenginoglu:2008uc}
{\sc A.~Zenginoglu, D.~Nunez, and S.~Husa}, {\em {Gravitational perturbations
  of Schwarzschild spacetime at null infinity and the hyperboloidal initial
  value problem}}, Class. Quant. Grav., 26 (2009), 035009,
  \url{https://arxiv.org/abs/0810.1929}.

\end{thebibliography}

\end{document}